\documentclass{ifacconf}

\usepackage[english]{babel}
\usepackage{amssymb}
\usepackage{stmaryrd}
\usepackage{tikz}
\usepackage{mathtools}
\usepackage{etoolbox}
\usepackage{tabularx}

\usepackage[ruled,vlined,algo2e,linesnumbered]{algorithm2e}
\usepackage{nicematrix}
\usepackage{url}

\DeclareMathAlphabet{\pazocal}{OMS}{zplm}{m}{n}
\usepackage [autostyle, english = american]{csquotes} 
\MakeOuterQuote{"}

\usepackage{xargs}
\usepackage{soul, color, xcolor, graphicx}
\usepackage{enumitem}
\usepackage{natbib}

\usetikzlibrary{positioning,arrows,petri,calc,decorations.markings,arrows.meta,patterns,patterns.meta}
\tikzset{
place/.style={circle,thick,minimum size=4mm,draw},
transitionV/.style={rectangle,thick,fill=black,minimum height=6mm,inner xsep=1pt}
}
\usepackage{pgfplots}
\pgfplotsset{compat=newest}

\definecolor{myblue}{RGB}{0, 101, 202}
\definecolor{mygreen}{RGB}{30, 0, 0}
\definecolor{myred}{RGB}{197, 14, 31}
\definecolor{mypurple}{RGB}{128, 0, 128}
\definecolor{myyellow}{RGB}{204, 204, 0}
\definecolor{mygrey}{RGB}{105, 105, 105}

\definecolor{myblue}{RGB}{0, 101, 202}
\definecolor{mygreen}{RGB}{80, 130, 0}
\definecolor{myred}{RGB}{197, 14, 31}
\definecolor{mypurple}{RGB}{128, 0, 128}
\definecolor{myyellow}{RGB}{204, 204, 0}
\definecolor{mygrey}{RGB}{105, 105, 105}
\colorlet{LightGray}{mygrey!15!white}

\newcommand{\myred}[1]{\textbf{\textcolor{myred}{#1}}}
\newcommand{\myblue}[1]{\textbf{\textcolor{myblue}{#1}}}
\newcommand{\mygreen}[1]{\textbf{\textcolor{mygreen}{#1}}}

\newcommand{\dint}[1]{\left\llbracket#1\right\rrbracket} % discrete interval

\newcommand{\D}{\mathcal{D}}
\renewcommand{\S}{\mathcal{S}}

\newcommand{\N}{\mathbb{N}}
\newcommand{\No}{\mathbb{N}_0}
\newcommand{\Z}{\mathbb{Z}}

\newcommand{\R}{\mathbb{R}}

\newcommand{\Rmax}{{\R}_{\normalfont\fontsize{7pt}{11pt}\selectfont\mbox{max}}}

\newcommand{\Rbar}{\overline{\R}}

\newcommand{\places}{\mathcal{P}}
\newcommand{\transitions}{\mathcal{T}}
\newcommand{\marking}{m}
\newcommand{\arcs}{\mathcal{A}}

\makeatletter
\newcommand{\splus}{%
  \DOTSB\mathop{\mathpalette\mattos@splus\relax}\slimits@
}
\newcommand\mattos@splus[2]{%
  \vcenter{\hbox{%
    \sbox\z@{$#1\oplus$}%
    \resizebox{!}{0.9\dimexpr\ht\z@+\dp\z@}{\raisebox{\depth}{$\m@th#1\boxplus$}}%
  }}%
  \vphantom{\oplus}%
}
\makeatother

\makeatletter
\newcommand{\stimes}{%
  \DOTSB\mathop{\mathpalette\mattos@stimes\relax}\slimits@
}
\newcommand\mattos@stimes[2]{%
  \vcenter{\hbox{%
    \sbox\z@{$#1\otimes$}%
    \resizebox{!}{0.9\dimexpr\ht\z@+\dp\z@}{\raisebox{\depth}{$\m@th#1\boxtimes$}}%
  }}%
  \vphantom{\otimes}%
}
\makeatother

\usepackage{pict2e}

\makeatletter
\newcommand*{\bigsplus}{\DOTSB\mathop{\mathpalette\big@boxplus\relax}\slimits@}

\newcommand{\big@boxplus}[2]{%
  \vcenter{%
    \m@th\bigbox@thickness{#1}%
    \sbox\z@{$#1\bigoplus$}%
    \dimen@=\ht\z@ \advance\dimen@\dp\z@
    \hbox{%
      \setlength{\unitlength}{\dimen@}%
      \begin{picture}(1,1)
      \polyline(0.1,0.1)(0.9,0.1)(0.9,0.9)(0.1,0.9)(0.1,0.1)(0.5,0.1)
      \polyline(0.5,0.1)(0.5,0.9)
      \polyline(0.1,0.5)(0.9,0.5)
      \end{picture}%
    }%
  }%
}

\newcommand{\bigbox@thickness}[1]{%
  \ifx#1\displaystyle
    \linethickness{0.2ex}%
  \else
    \ifx#1\textstyle
      \linethickness{0.16ex}%
    \else
      \ifx#1\scriptstyle
        \linethickness{0.12ex}%
      \else
        \linethickness{0.1ex}%
      \fi
    \fi
  \fi
}
\makeatother

\newcommand{\svdots}{\raisebox{3pt}{$\scalebox{.6}{\vdots}$}} 
\newcommand{\sddots}{\raisebox{3pt}{$\scalebox{.6}{$\ddots$}$}}

\renewcommand{\epsilon}{\varepsilon}

\newcommand\myOverwrite[2]{\!\makebox[0cm][l]{#1}#2\ \!} % for divisions
\newcommand{\ldiv}{\textup{\hspace{0.85mm}\myOverwrite{$\circ$}{$\backslash$}\hspace{-0.5mm}}} % left division
\newcommand{\rdiv}{\textup{\hspace{0.85mm}\myOverwrite{$\circ$}{$\slash$}\hspace{-0.5mm}}} % right division

\newcommand{\eqtop}[1]{\mathrel{\overset{{\mbox{\normalfont\scriptsize #1}}}{=}}}
\newcommand{\geqtop}[1]{\mathrel{\overset{{\mbox{\normalfont\scriptsize #1}}}{\geq}}}

\newcommand{\ie}{i.e., }

\renewcommand{\e}{e}
\newcommand{\epsi}{\varepsilon}

\newcommand{\mathqedhere}{\tag*{$\blacksquare$}} 

\newcommand*{\qedhere}[1][$\blacksquare$]{%
\leavevmode\unskip\penalty9999 \hbox{}\nobreak\hfill
    \quad\hbox{#1}%
}

\begin{document}

% double spacing for Joerg:
%\linespread{2}
%\openup 1em

\begin{frontmatter}

\title{A Luenberger Observer\\for P-Time Event Graphs} 

\author[First]{Dominik Tirpák}
\author[First]{Davide Zorzenon}
\author[First,Second]{J\"{o}rg Raisch\thanksref{footnoteinfo}}

\thanks[footnoteinfo]{
Support from Deutsche Forschungsgemeinschaft (DFG) via grant
RA 516/14-1 and under Germany’s Excellence Strategy – EXC
2002/1 “Science of Intelligence” – project number 390523135 is
gratefully acknowledged.
This paper is derived from the master's thesis of the first author, \cite{tirpak2024development}.
%This work was funded by the Deutsche Forschungsgemeinschaft (DFG, German Research Foundation), Projektnummer RA 516/14-1.
%Partially supported by the Deutsche Forschungsgemeinschaft (DFG, German Research Foundation), under Germany's Excellence Strategy - EXC 2002/1 "Science of Intelligence" - project number 390523135.
}

\address[First]{Control Systems Group, Technische Universit\"{a}t Berlin, Germany (email:~tirpakdominik@gmail.com, [zorzenon,raisch]@control.tu-berlin.de).}
\address[Second]{Science~of~Intelligence,~Research~Cluster~of~Excellence,~Berlin,~Germany.}

\begin{abstract}
P-Time Event Graphs (P-TEGs) are discrete event systems able to model synchronization and delay phenomena.
They extend the modeling power of Timed Event Graphs (TEGs) by including not only lower-bound, but also upper-bound constraints on the sojourn times of tokens in places.
In this work, we consider the problem of estimating the firing time of transitions in P-TEGs, assuming that only a subset of transitions can be directly observed.
Building on the (now classical) Luenberger observer for TEGs, we design an algorithm that takes into account the additional restrictions posed by the upper-bound constraints of P-TEGs to obtain a more accurate firing time estimation.
%A simple example of a P-TEG is used to illustrate the benefits of this approach.
%Our algorithm is optimal, in the sense that the estimated firing times are as large as possible (according to the available information) but cannot be later than the actual firing times.
\end{abstract}

\begin{keyword}
    Discrete-event systems, state estimation, Petri nets, max-plus algebra%, P-time event graphs
\end{keyword}

\end{frontmatter}
%===============================================================================

\section{Introduction}

P-Time Event Graphs (P-TEGs) are mathematical tools for modeling time-critical man-made systems, which are ubiquitous in, e.g, the manufacturing, chemical, and food industry (\cite{khansa1996p}).
They are able to represent synchronization of events (e.g., both tasks A and B need to end before starting task C) and time-window constraints (e.g., deadlines, under-/over-processing constraints).

In this paper, we adopt the point of view of an external observer of a system described by a P-TEG.
The observer can only measure the time when certain events occur (e.g., input material enters the plant, or a finished product is shipped), while other events remain hidden (e.g., a semi-finished product enters a certain processing stage) and may be subject to unexpected delays.
The aim of the observer is to reconstruct, in the best possible way and during the operation of the system (i.e., online), the time in which events of the latter type occur.

To solve this problem, we extend the Luenberger observer for timed event graphs (TEGs) introduced in \cite{hardouin2010observer} and discussed in \cite[Chapter 10]{hardouin2018control} to the case of P-TEGs.
Unlike P-TEGs, TEGs are unable to model temporal upper-bound constraints, which our observer can exploit through an internal model of the P-TEG to get more refined estimations of the system behavior.
On the technical side, this extension is made possible by the adoption of a more flexible alternative to the well-known $\gamma$-transform, called infinite vector representation. %which is used to conveniently transform recurrence equations and inequalities into algebraic ones.
Both of these techniques are used to convert linear max-plus difference equations and inequalities into algebraic ones.
The difference is that, unlike the $\gamma$-transform, representing the dynamics by means of infinite vectors allows to deal with several types of initial conditions, switching, and the sort of inequalities characterizing P-TEGs---for a longer discussion on the issues related to the application of the $\gamma$-transform in P-TEGs, we refer to \cite[Section 7.3]{zorzenon2025modeling}.

\subsection{Related work}
%Our problem can be considered as a subproblem of 
Besides \cite{hardouin2010observer}, the state estimation problem for TEGs has been considered in several works, see e.g.,  \cite{Espindola2022stochastic,di2010duality} and references therein; see also \cite{TRUNK2020501}, where the Luenberger observer from~\cite{hardouin2010observer} is generalized to weighted timed event graphs.

Our problem is related to the marking estimation problem for P-time Petri nets, which was addressed in \cite{bonhomme2015marking}. % and extended to a decentralized framework in \cite{Bonhomme2021decentralized}.
Unlike P-TEGs, P-time Petri nets are additionally able to model conflicts.
The main difference between our approaches is that our observer provides the best occurrence time estimates, i.e., the greatest under-approximation of the occurrence times of events, whereas the algorithm by Bonhomme computes all possible sequences of events that are consistent with the available information.
An advantage of our method is that it does not require the reachability set of the underlying Petri net to be finite.
%Moreover, the extension of our observer to larger classes of systems 
%, and \cite{Elpindola2022stochastic}, where the estimation problem for TEGs in presence of model uncertainties is addressed.

\textbf{Notation.}
Symbols $\N$, $\No$, $\Z$, $\R_{\geq 0}$, $\R$ denote the set of positive integers, nonnegative integers, integers, nonnegative reals, and reals, respectively.
Moreover, $\Rmax\coloneqq\R\cup\{-\infty\}$ and $\Rbar\coloneqq\R\cup\{-\infty,+\infty\}$.
$A\in \D^{m\times n}$ denotes a matrix with dimensions $m\times n$ and elements from set $\D$, and $A^T$ denotes the transpose of matrix $A$.
For a column vector of length $p$, the notation $b\in \D^{p}$ is used.
% Discrete interval
Given $a,b\in\Z$ with $b \geq a$, $\dint{a, b}$ indicates the discrete interval $[a, b]\cap \Z$ = $\{ a, \dots, b \}$.
\section{Algebraic preliminaries}

This section briefly recalls the necessary algebraic tools.
For a more detailed overview of the topics covered here, we refer to \cite{baccelli1992synchronization,hardouin2018control}.

% Dioids in general
\subsection{Idempotent semirings}
\label{subsec: dioids}

To model the dynamics of P-TEGs, the max-plus algebra is employed, which is a complete idempotent semiring (or complete dioid).
A dioid is a set with the binary operations $\oplus$ (dioid addition) and $\otimes$ (dioid multiplication, symbol often omitted).
$\oplus$ is associative, commutative and idempotent, with neutral (zero) element $\varepsilon$.
$\otimes$ distributes over $\oplus$, is associative and has neutral (identity) element $e$.
Also, the zero element $\varepsilon$ is absorbing for dioid multiplication.
A dioid is complete if it is closed for infinite sums and and if multiplication distributes over infinite sums.

The order relation $\leq$ of a dioid is defined using addition by $a\leq b\ \Leftrightarrow \ a\oplus b = b$ for all $a, b$ from the dioid.
Operations $\oplus$, $\otimes$ are order preserving, meaning that $a\leq b \ \Leftrightarrow\ a\oplus c\leq b\oplus c$ and $ac\leq bc$.
Moreover, $\oplus$ satisfies: $a\leq b$ and $c\leq b$ $\Leftrightarrow$ $a\oplus c\leq b$.
Another binary operation, the infimum (also associative, commutative and idempotent) can be defined: $a\leq b\ \Leftrightarrow\ a\wedge b = a$.
The Kleene star $^*$ and the Kleene plus $^+$ operators are defined by $a^* = e\oplus a\oplus a^2\oplus a^3\oplus \dots$, with $a^0=e$ and $a^{k+1} = a\otimes a^k$ for all $k\in\N_0$, and $a^+ = a a^*$.
% Left and right divison
From the so-called residuation theory, we obtain two additional important operations: left division $\ldiv$ and right division $\rdiv$.
$a\ldiv b$ (read "$b$ left divided by $a$") can be defined as the unique element for which the equivalence $x\leq a\ldiv b \ \Leftrightarrow\ ax\leq b$ is satisfied; similarly, the result of $c\rdiv d$ (read "$c$ right divided by $d$") is defined such that $x\leq c\rdiv d \ \Leftrightarrow\ xd\leq c$.
Moreover, we denote $a\rdiv a$ by $a^\times$.
%Residuation theory guarantees that operations $\ldiv$ and $\rdiv$ are well-defined.
%provides the greatest solution $x$ of inequality $a x\leq b$, whereas $c\rdiv d$ will be the greatest solution to $x d\leq c$.

% TODO: not sure if all of these properties are important in this paper
%\renewcommand\arraystretch{0}
%\renewcommand\tabularxcolumn[1]{m{#1}}% for vertical centering text in X column
Properties of the operations defined so far are listed below:\\[-.2cm]
\begin{subequations}
  \begin{tabularx}{\linewidth}{p{3cm}X}
  \begin{equation}\label{eq:a**}
    (a^*)^*=a^*,
  \end{equation}
  & 
  \begin{equation}\label{eq:a*a*}
    a^*a^*=a^* ,
    \end{equation}
    \end{tabularx}
    \end{subequations}\\[-.8cm]
\begin{subequations}
  \begin{tabularx}{\linewidth}{p{3cm}X}
  \begin{equation}\label{eq:(x/a)a}
    (x\rdiv a) a \leq x,
  \end{equation}
  & 
  \begin{equation}\label{eq:x/a+b}
   x\rdiv(a\oplus b) = x\rdiv a \wedge x\rdiv b.
  \end{equation}
  \end{tabularx}
\end{subequations}\\[-.4cm]
%\begin{align}
%     &  \\
%    & 
%\end{align}
%\begin{equation}
%\end{equation}
%$a\rdiv a = (a\rdiv a)^*$ and $(a\rdiv a) a = a$.
%Note that $a^* \geq e$ holds for any dioid element $a$. Also, $a^* a^* = a^*$ and $(a^*)^* = a^*$. The Kleene plus operator can then be defined by $a^+ = aa^*$.
%The following 
\begin{thm}\label{th:star}
    The least solution of $x = ax\oplus b$ is $x = a^* b$.
\end{thm}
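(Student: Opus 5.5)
The proof has two parts: first show that $a^*b$ solves the equation, then show it lies below every other solution. Both parts use only the complete-dioid axioms recalled above: closure under infinite sums, distributivity of $\otimes$ over infinite sums, and monotonicity of $\oplus$ and $\otimes$ with respect to $\leq$.

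\emph{Step 1 ($a^*b$ is a solution).} Write $a^* = \bigoplus_{k\geq 0} a^k$. Using distributivity over infinite sums, we get $a(a^*b) = \bigoplus_{k\geq 1} a^k b$. Adding $b = a^0 b$ gives
\[
  a(a^*b)\oplus b = \bigoplus_{k\geq 0} a^k b = a^* b .
\]
The last equality is again right distributivity of $\otimes$ over the infinite sum defining $a^*$.

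\emph{Step 2 (minimality).} Let $x$ be any solution, so $x = ax\oplus b$. Then $x\geq b$ and $x\geq ax$. By induction on $k$, we show $x \geq \bigoplus_{i=0}^{k} a^i b \oplus a^{k+1}x$.
\begin{itemize}
  \item The base case $k=0$ is the equation itself.
  \item For the inductive step, substitute $x = ax\oplus b$ into the term $a^{k+1}x$ and distribute: $a^{k+1}x = a^{k+2}x \oplus a^{k+1}b$.
\end{itemize}
In particular, $x \geq a^k b$ for every $k\in\No$, since $u\oplus v\geq u$. Because $\oplus$ satisfies ``$u\leq x$ and $v\leq x$ $\Leftrightarrow$ $u\oplus v\leq x$'', and completeness lets us read the supremum of a family of upper-bounded elements as the infinite sum, we conclude $x\geq \bigoplus_k a^k b = a^*b$.

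The only point needing care is this last passage from the finite bounds to the infinite sum. It requires that the infinite sum in a complete dioid is the least upper bound with respect to $\leq$. This is the standard characterization: the order is defined by $\oplus$ and extends to arbitrary sums in complete dioids. I would state it explicitly rather than derive it, as it is part of the completeness assumption.
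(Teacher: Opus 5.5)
Your proof is correct. It is the standard textbook argument (as in Baccelli et al.): verify that $a^*b$ is a fixed point by distributing over the infinite sum, then show by induction that every solution dominates each $a^k b$. The paper gives no proof of its own and only recalls this result as a known fact about complete dioids, so there is no different route to compare against. You were right to isolate explicitly the fact that, in a complete dioid, the infinite sum is the least upper bound for the natural order; that is exactly what is needed to pass from the bounds $x\geq a^k b$ to $x\geq a^*b$.
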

%The Kleene star theorem states that in a complete dioid, 

% The max-plus algebra
\subsection{Max-plus algebra}

The max-plus algebra is a complete dioid, containing elements from $\Rbar = \R \cup \{ +\infty,-\infty \}$.
The operation $\oplus$ corresponds to the standard $\mathrm{\max}$ operation, $\otimes$ to standard addition (with the additional rule $-\infty + (+\infty)=+\infty + (-\infty)=-\infty$), while the infimum $\wedge$ is the standard $\mathrm{min}$ operation.
Its order relation corresponds to the standard order $\leq$.
The zero element is $\varepsilon = -\infty$, and the identity element is $e = 0$.
Left and right divisions correspond to standard subtraction: $a\ldiv b=b-a$ and $a\rdiv b = a-b$ (with the additional rule $+\infty-(+\infty)=-\infty-(-\infty)=+\infty$).

% Matrix dioid
For any sets $\D$ and $M,N\subseteq \Z$, a function $A:M\times N\rightarrow \D$ is called matrix; we denote $A(i,j)$ by $A_{ij}$ and the set of all $M\times N$ matrices with codomain $\D$ by $\D^{M\times N}$.
The set of square matrices of finite or infinite dimension $\Rmax^{N\times N}$, where $N\subseteq\Z$, with elements from the max-plus algebra, is also a complete dioid.
The extension of the dioid operations to matrices works in an analogous way to the standard algebra.
In particular, $(A\oplus B)_{ij} = A_{ij} \oplus B_{ij}$ and $(A\otimes B)_{ij} = \bigoplus_k A_{ik}\otimes B_{kj}$.
Moreover, $(A\wedge B)_{ij} = A_{ij}\wedge B_{ij}$ and $(A\rdiv B)_{ij} = \bigwedge_k A_{ik}\rdiv B_{jk}$.
%The same holds for multiplication of a matrix by a scalar.
Here, the zero matrix is the matrix $\mathcal{E}$ with elements all equal to $\varepsilon$.
The identity matrix $E$ is a square matrix with $e$ on its main diagonal and $\varepsilon$ elsewhere.
Note that for matrices, $\leq$ is a partial order, defined by addition as described in Section \ref{subsec: dioids}.
%In the reminder of the paper, vector $v$ of length $q$ containing only $e$, resp. $\varepsilon$ elements, will be denoted $e_q$, resp. $\varepsilon_q$.

% Infinite matrix dioid
%An interesting extension of matrix dioids is the case of infinitely large matrices. The set of matrices of infinite size, here denoted by $\Rmax^{\N_0 \times \N_0}$, also forms an infinite dioid.

%To compute $\rdiv$ for matrices, consider $A\in\Rmax^{n\times m}$ and $Y\in\Rmax^{n\times p}$, as well as $Z\in\Rmax^{p\times m}$. Then,
%%$$(A\ldiv Y)_{ij} = \bigwedge_{k=1}^n (a_{ki}\ldiv y_{kj}),\ i = 1,\, \dots,\, m\ \mathrm{and}\ j = 1,\, \dots,\, p\, ;$$
%$$(Z\rdiv A)_{ij} = \bigwedge_{k=1}^m (z_{ik}\rdiv a_{jk}),\ i = 1,\, \dots,\, p\ \mathrm{and}\ j = 1,\, \dots,\, n\, .$$

\section{P-time event graphs}

In this section, we recall the definition of P-TEGs and their dynamics.

\subsection{General definition}\label{su:definition}

\begin{defn}[\cite{khansa1996p}]\label{de:PTEG}
A P-TEG is a 5-tuple $(\places,\transitions,\arcs,\marking,\iota)$, in which:
\begin{itemize}
    \item $\places$ is a finite set of places,
    \item $\transitions$ is a finite set of transitions,
    \item $\arcs\subseteq (\places\times \transitions)\cup (\transitions \times \places)$ is the set of arcs connecting places to transitions and transitions to places,
    \item $\marking:\places\rightarrow\No$ and $\iota:\places\rightarrow\{[\tau^-,\tau^+]\cap \R \mid \tau^-\in \R_{\geq 0},\tau^+\in\R_{\geq 0}\cup\{+ \infty\}\}$ are two maps that associate to each place $p\in\places$, respectively, its initial number of tokens (or markings) $\marking(p)$, and a time interval $\iota(p)=[\tau_p^-,\tau_p^+]\cap \R$,
    \item every place has exactly one upstream transition and one downstream transition, \ie $\forall p\in\places$, $\exists ! t_\textup{u},t_\textup{d}\in\transitions$ such that $(t_\textup{u},p),(p,t_\textup{d})\in\arcs$.
\end{itemize}
\end{defn}

A transition $t\in\transitions$ is said to be enabled if either it has no upstream places (i.e., $\forall p\in\places$, $(p,t)\notin\arcs$) or each upstream place $p\in \places$ contains at least one token that has resided in $p$ for a time included in interval $[\tau_p^-,\tau_p^+]\cap\R$.
Note that this time interval is always closed, unless $\tau_p^+=+\infty$, in which case it is of the form $[\tau_p^-,+\infty)$.
When transition $t$ is enabled, it can fire, causing one token to be instantaneously removed from each upstream place and one token to be instantaneously added to each downstream place.
More precise conditions for the firing of transitions will be discussed in Section~\ref{su:assumptions}.

If a token resides for too long in $p$, violating the constraint imposed by interval $[\tau_p^-,\tau_p^+]\cap \R$, then the token is said to be \textit{dead}.
In this article, we will only consider \emph{consistent} trajectories of P-TEGs in which no token death occurs.
Recall that a timed event graph (TEG) is a P-TEG in which $\tau_p^+=+\infty$ for all $p\in\places$.

\subsection{Model assumptions}\label{su:assumptions}

In this work, we categorize the transitions of P-TEGs as follows:
\begin{itemize}
    \item internal transitions $x_i$, $i\in\dint{1,n_x}$, which possess both upstream and downstream places,
    \item output transitions $y_i$, $i\in\dint{1,n_y}$, having only upstream places,
    \item input transitions, with downstream places only, which can be further distinguished as:
    \begin{itemize}
        \item observable input transitions $u_i$, $i\in\dint{1,n_u}$,
        \item unobservable input transitions $w_i$, $i\in\dint{1,n_w}$.
    \end{itemize}
\end{itemize}
Moreover, we assume that internal and output transitions (but not input transitions) follow the \emph{earliest firing rule}, \ie they fire as soon as they are enabled. %; input transitions on the other hand are not required to fire the moment in which they are enabled.

The distinction between observable and unobservable input transitions indicates which information is available to the observer that will be developed in the next section.
In particular, only the firing times of output transitions and observable input transitions can be accessed directly by the observer, while the firing times of internal transitions and unobservable input transitions cannot be directly measured.
The goal of the observer will precisely be to estimate the firing times of internal transitions.

The following structural assumptions are made solely to simplify the subsequent discussions, and do not limit the generality of the results\footnote{In the sense that any P-TEG can be transformed into a behaviorally equivalent one that satisfies the given assumption, at the cost of increasing the number of transitions and places.}:
\begin{enumerate}[label=A\arabic*]
    \item\label{en:assumptiontokens} each place $p\in\places$ has either zero or one initial token, \ie $m(p)\in\{0,1\}$,
    \item\label{en:assumptioninputs} each input (resp., output) transition has exactly one downstream (resp., upstream) place $p$, which has zero initial tokens, $m(p)=0$, and time interval $\iota(p)=[0,+\infty)$, and the downstream (resp., upstream) transition of $p$ is internal,
    %\item\label{en:assumption3} each input transition (either observable or unobservable) has exactly one downstream place $p$ with zero initial tokens, $m(p)=0$, time interval $\iota(p)=[0,+\infty)$, and the downstream transition of $p$ is internal,
    %\item[A3]\label{en:assumption4} 
    \item\label{en:assumptioninitial} initial tokens "appear" in their respective places at time $0$. %, and they are allowed to stay in these places for an arbitrary amount of time (even larger than what the upper-bound constraint would specify)\footnote{Other types of initial conditions have been considered in the literature (see, e.g., \cite{zorzenon2023switched}).}.
\end{enumerate}
Finally, our last assumption is going to simplify the computations in Section \ref{se:observer}:
\begin{enumerate}[label=A\arabic*]\setcounter{enumi}{3}
    \item\label{en:assumptionuncontrollable} for each internal transition $x_i$, there is exactly one unobservable input transition $w_i$ whose downstream place is upstream of $x_i$.
\end{enumerate}
Assumption \ref{en:assumptionuncontrollable} is indeed conservative, and generalizing our results to P-TEGs in which it does not hold is left for future work.

\begin{exmp}
\begin{figure}[t]
    \centering
    %\resizebox{.8\linewidth}{!}{
    \begin{tikzpicture}[node distance=1cm and 1.2cm,bend angle=30,thick,on grid]
\footnotesize

\node [transitionV,label=below:{$x_1$}] (x1) {};
\node [place,above=1cm of x1,tokens=1,label=above:{$[2,10]$}] (p11) {};
\node [place,below right=.7cm and 1.2cm of x1,label=above:{$[0,5]$}] (p12) {};
\node [transitionV,right=of p12,label=below:{$x_2$}] (x2) {};
\node [place,right=of x2,label=above:{$[0,0]$}] (p23) {};
\node [transitionV,above right=.7cm and 1.2cm of p23,label=below:{$x_3$}] (x3) {};
\node [place,myblue,right=of x3] (p3y) {};
\node [transitionV,right=of p3y,myblue,label=below:{\myblue{$y_1$}}] (y) {};
\node [place,above=.7cm of x2,label=above:{$[1,+\infty)$}] (p13) {};
\node [transitionV,above=of p13,myred,label=above:{\myred{$w_1$}}] (w1) {};
\node [transitionV,above=of y,myred,label=above:{\myred{$w_3$}}] (w3) {};
\node [place,myred,left=of w3] (pw3) {};
\node [place,myred,left=of w1] (pw1) {};
\node [place,myred,below=.7cm of p23] (pw2) {};
\node [transitionV,right=of pw2,myred,label=above:{\myred{$w_2$}}] (w2) {};
\node [place,mygreen,below=.7cm of p12] (pu2) {};
\node [transitionV,left=of pu2,mygreen,label=above:{\mygreen{$u_1$}}] (u) {};

\draw [-stealth'] (x1.-90+20) to (p12);
\draw [-stealth'] (x1.90-10) [bend right] to (p11);
\draw [-stealth'] (p11) [bend right] to (x1.90+10);
\draw [-stealth'] (x1) to (p13);
\draw [-stealth',myred] (pw1) to (x1.90-20);
\draw [-stealth',myred] (w1) to (pw1);
\draw [-stealth',myred] (pw3) to (x3.90-20);
\draw [-stealth',myred] (w3) to (pw3);
\draw [-stealth'] (p12) to (x2);
\draw [-stealth'] (x2) to (p23);
\draw [-stealth'] (p23) to (x3.-90-20);
\draw [-stealth'] (p13) to (x3);
\draw [-stealth',myblue] (p3y) to (y);
\draw [-stealth',myblue] (x3) to (p3y);
\draw [-stealth',mygreen] (pu2) to (x2.-90-20);
\draw [-stealth',mygreen] (u) to (pu2);
\draw [-stealth',myred] (pw2) to (x2.-90+20);
\draw [-stealth',myred] (w2) to (pw2);

\end{tikzpicture}
    %}
    \caption{Example of P-TEG. Where not indicated, the interval associated to places is $[0,+\infty)$.}\label{fi:P-TEG_example}
\end{figure}
    Figure \ref{fi:P-TEG_example} shows an example of a P-TEG.
    Transitions are represented by bars, places by circles, initial tokens by dots, and arcs by arrows; the time interval associated to places is shown in the proximity of the corresponding place.
    The P-TEG has internal transitions $x_1,\,x_2,\,x_3$, observable input transition $u_1$, unobservable input transitions $w_1,\,w_2,\,w_3$, and output transition $y_1$. %\qedhere
    %According to the assumptions above, the first firing time of transition $x_1$ must be at least at time $2$, because of the lower bound on the place upstream and downstream of $x_1$, but can be 
\end{exmp}

\subsection{Dynamics of P-TEGs}

%The behavior of P-TEGs can be represented by equations and inequalities 
In order to represent the behavior of P-TEGs by means of equations and inequalities, it is convenient to introduce dater functions.
Let us define dater functions $x:\N_0\rightarrow\R^{n_x}$, $y:\N\rightarrow\R^{n_y}$, $u:\N\rightarrow\R^{n_u}$, and $w:\N\rightarrow\R^{n_w}$ so that, for all $k\geq 1$, $x_i(k)$, respectively $y_i(k)$, $u_i(k)$, and $w_i(k)$, represents the time when transition $x_i$, respectively $y_i$, $u_i$, $w_i$, fires for the $k$-th time. 
Since the $(k+1)$-st firing of each transition cannot occur before the $k$-th one, the dater functions are nondecreasing. 
For handling initial conditions, we set the auxiliary variable $x_i(0)$ to $0$ for all $i$; in this way, all components of $x(0)$ assume the meaning of "initial time $t=0$".

\begin{rem}\label{re:unrolling}
    Note that dater function $x:\N_0\rightarrow\R^{n_x}$ can be interpreted as an infinite-dimensional vector $x\in\R^{\N_0}$ in which entry $x_i$ corresponds to element $x_j(k)$, where 
    \begin{equation}\label{eq:k_and_j}
        k = \left\lceil \frac{i}{n_x} \right\rceil-1 \mbox{ and } j = (i-1 \mod n_x)+1.
    \end{equation}
    Similar interpretations for $y,u,w$ hold; for instance, $y:\N\rightarrow\R^{n_y}$ is associated with vector $y\in\R^{\N}$ such that $y_i = y_j(k)$, where $k = \left\lceil \frac{i}{n_y} \right\rceil$ and $j = (i-1\mod n_y)+1$.
    This point of view will be adopted starting from Section~\ref{su:unrolling}.%\qedhere
\end{rem}

Recall (from, e.g., \cite{hardouin2018control}) that dater functions of a P-TEG in which internal and output transitions follow the earliest firing rule must satisfy for all $k\geq 1$:
\begin{align}
    \label{eq:x0} x(0) &= e_{n_x},\\
    \label{eq:EFR} x(k) &= A_0 x(k)\oplus A_1x(k-1)\oplus B_0 u(k)\oplus R_0 w(k),\\
    \label{eq:output} y(k)&=C_0x(k),
\end{align}
where $e_{n_x}$ is a vector of $n_x$ $e$'s, $A_0,A_1\in\Rmax^{n_x\times n_x}$ are defined so that, given $\mu\in\{0,1\}$ and $i,j\in\dint{1,n_x}$, $(A_{\mu})_{ij}=\tau_p^{-}$ if there is a place $p$ with $m(p)=\mu$ initial tokens (see \ref{en:assumptiontokens}), upstream transition $x_j$, downstream transition $x_i$, and time interval $\iota(p)=[\tau_p^-,\tau_p^+]\cap\R$, and $(A_{\mu})_{ij}=\epsi$ otherwise; similarly, $B_0\in\{\epsi,\e\}^{n_x\times n_u}$, $R_0\in\{\epsi,\e\}^{n_x\times n_w}$, $C_0\in\{\epsi,\e\}^{n_y\times n_x}$ are defined so that elements $(B_0)_{ij}$, respectively $(R_0)_{ij}$, $(C_0)_{ij}$ are $\e$ if there is a place (with $0$ initial tokens and time interval $[0,+\infty)$, by \ref{en:assumptioninputs}) connecting transitions $u_j$ and $x_i$, respectively $w_j$ and $x_i$, $x_j$ and $y_i$, and are $\epsi$ otherwise.
Due to \ref{en:assumptionuncontrollable}, $R_0$ is the identity matrix $E$ and therefore $n_w = n_x$.
We remark that \eqref{eq:x0} is used to handle the initial conditions according to \ref{en:assumptioninitial}.
%; note that this equation implies:
%\begin{equation}
%    \label{eq:x0_inequality}x(0)\geq \mathbb{E}x(0),
%\end{equation}
%where $\mathbb{E}_{ij}=\e$ for all $i,j\in\dint{1,n_x}$.

The dynamical equations presented so far ignore the upper-bound constraints on the residence time of tokens in places (thus, they are adequate to model TEGs but not general P-TEGs).
In order to take them into account, we need to impose additional inequalities on dater $x$.
Consider a place $p$ with $m(p)\in\{0,1\}$ initial tokens, upstream transition $x_j$, downstream transition $x_i$, and time interval $\iota(p)=[\tau_p^-,\tau_p^+]\cap\R$.
Then, the inequality $x_j(k-m(p)) + \tau_p^+ \geq x_i(k)$ must be satisfied for all $k\geq 1$.
Equivalently, this inequality can be written as $x_j(k-m(p)) \geq -\tau_p^+ + x_i(k)$.
We can group together these inequalities for all internal transitions into two max-plus vector inequalities: for all $k\geq 1$,
\begin{align}
    \label{eq:U0}x(k) &\geq U_0 x(k),\\
    \label{eq:U1}x(k-1) &\geq U_1 x(k),
\end{align}
where $U_0,U_1\in\Rmax^{n\times n}$ are matrices defined such that $(U_{\mu})_{ji} = -\tau_p^+$ if there is a place $p$ with $m(p)=\mu\in\{0,1\}$ initial tokens, upstream transition $x_j$, downstream transition $x_i$, and time interval $\iota(p)=[\tau_p^-,\tau_p^+]\cap\R$, and $(U_\mu)_{ji}=\epsi$ otherwise.

Moreover, to model the nondecreasingness of dater function $x$, it is necessary to impose for all $k\geq 1$:
\begin{align}
    \label{eq:nondecreasing}x(k)&\geq Ex(k-1).
\end{align}

\begin{exmp}
Consider again the P-TEG in Figure~\ref{fi:P-TEG_example}.
According to Sections \ref{su:definition} and \ref{su:assumptions}, the dater functions need to satisfy \eqref{eq:x0}-\eqref{eq:nondecreasing}, with 
\[
    A_0=
    \begin{bmatrix}
        \epsi & \epsi & \epsi\\
        \e & \epsi & \epsi\\
        1 & \e & \epsi
    \end{bmatrix},\
    A_1=
    \begin{bmatrix}
        2 & \epsi & \epsi\\
        \epsi & \epsi & \epsi\\
        \epsi & \epsi & \epsi
    \end{bmatrix},\
    B_0=
    \begin{bmatrix}
        \epsi \\
        \e \\
        \epsi 
    \end{bmatrix},\
    R_0=
    \begin{bmatrix}
        \e & \epsi & \epsi\\
        \epsi & \e & \epsi\\
        \epsi & \epsi & \e
    \end{bmatrix},
\]
\[
    C_0 = 
    \begin{bmatrix}
        \epsi&\epsi&\e
    \end{bmatrix},\
    U_0 = 
    \begin{bmatrix}
        \epsi & -5 & \epsi\\
        \epsi & \epsi & \e\\
        \epsi& \epsi& \epsi
    \end{bmatrix},\
    U_1 = 
    \begin{bmatrix}
        -10&\epsi&\epsi\\
        \epsi&\epsi&\epsi\\
        \epsi&\epsi&\epsi\\
    \end{bmatrix}.
\]

Suppose that all input transitions, i.e., $u_1$ and $w_i$, $i\in\dint{1,3}$, fire for the fist time at time $2$.
Then, $u_1(1) = w_i(1) =  2$, and the unique solution $x(1)$ of \eqref{eq:EFR} for $k=1$ is $x(1) = [2\ 2\ 3]^T$.
However, note that these values of $x(1)$ violate \eqref{eq:U0}, which imposes that $x_2(1) \geq x_3(1)$, in accordance with the upper-bound constraint on the downstream place of transition $x_2$.
Therefore, this choice of input firing times leads to a non-consistent trajectory.

Let us instead consider the following firing times: % of input transitions:
\begin{equation}\label{eq:wexample}
    w(1)\!=\! \begin{bmatrix}
        e\\4\\e
    \end{bmatrix}\!\!,\,
    w(2) \!=\! \begin{bmatrix}
        7\\10\\5
    \end{bmatrix}\!\!,\,
    w(3) \!=\! \begin{bmatrix}
        16\\20\\20
    \end{bmatrix}\!\!,\,
    w(4)\!=\!\begin{bmatrix}
        26\\31\\20
    \end{bmatrix}\!\!,
\end{equation}
and $u_1(k)=0$ for all $k$.
With these values, \eqref{eq:EFR} results in the trajectory
\begin{equation}\label{eq:xexample}
    x(1) = \begin{bmatrix}
        2\\4\\4
    \end{bmatrix}\!\!,\,
    x(2)=\begin{bmatrix}
        7\\10\\10
    \end{bmatrix}\!\!,\,
    x(3)=\begin{bmatrix}
        16\\20\\20
    \end{bmatrix}\!\!,\,
    x(4)=\begin{bmatrix}
        26\\31\\31
    \end{bmatrix}\!\!,
\end{equation}
which satisfies constraints \eqref{eq:U0} for all $k\in\dint{1,4}$ and \eqref{eq:U1} for all $k\in\dint{1,3}$.
Finally, note that $y_1(k) = x_3(k)$ for all $k$, according to \eqref{eq:output}.
%\qedhere
\end{exmp}
%\zor{Comments on fastest trajectory being not consistent? Formulate an input firing sequence such that the observer will exhibit different estimations when considering and not considering upper bound constraints} 

%\subsection{Manipulation of the dynamics}
%
%%For convenience, it is useful to transform \eqref{eq:EFR} in explicit form using Proposition \ref{pr:implicit_explicit}:
%%\begin{equation}
%%    \label{eq:EFR_explicit} x(k) = Ax(k-1)\oplus Bu(k)\oplus Rw(k),
%%\end{equation}
%%where $A=A_0^*A_1$, $B = A_0^*B_0$, and $R = A_0^*R_0$.
%
%In order to obtain it, we start by noting that \eqref{eq:EFR}, \eqref{eq:U0}, \eqref{eq:nondecreasing}, and \eqref{eq:EFR_explicit} imply: for all $k\geq 1$,
%\begin{align}
%    \label{eq:centerbound} x(k)&\geq Q x(k),\\
%    \label{eq:rightbound} x(k)&\geq P x(k-1),
%\end{align}
%where $Q = A_0\oplus U_0$ and $P = A\oplus E$.
%Moreover, \eqref{eq:x0} implies
%
\subsection{Infinite vector representation}\label{su:unrolling}

%In this subsection, we introduce a simple technique, called \emph{unrolling}, to obtain algebraic equations and inequalities from the dynamical system \eqref{eq:U1}, \eqref{eq:EFR_explicit}-\eqref{eq:x0_inequality}.
%Unrolling can be thought of as a generalization of the $\gamma$-transform.
%
%Let us define 
As discussed in Remark~\ref{re:unrolling}, there is a 1-to-1 correspondence between each dater function $x, u, w, y$ and its related infinite-dimensional vector, which we denote by the same symbol.
Adopting this point of view, we can rewrite \eqref{eq:x0}-\eqref{eq:nondecreasing} as a system of two algebraic equations and one inequality over vectors and matrices of infinite dimension:
\begin{align}
    \label{eq:EFR_infinite}x &= x_0 \oplus A x \oplus B u \oplus Rw,\\
    \label{eq:output_infinite}y &= C x,\\
    \label{eq:upper_infinite}x &\geq U x,
\end{align}
where \[
x_0 = \begin{bsmallmatrix}
    e_{n_x}\\\epsilon_{n_x}\\\epsilon_{n_x}\\\epsilon_{n_x}\\\svdots
    \end{bsmallmatrix}\!,\
    A = \begin{bsmallmatrix}
        \mathcal{E} & \mathcal{E} &\mathcal{E} &\mathcal{E} &\cdots\\
        A_1 & A_0 &\mathcal{E} &\mathcal{E} &\cdots\\
        \mathcal{E} & A_1 &A_0 &\mathcal{E} &\cdots\\
        \mathcal{E} & \mathcal{E} &A_1 &A_0 &\cdots\\
        \svdots & \svdots &\svdots &\svdots &\sddots
    \end{bsmallmatrix}\!,\
\]\[
    B = 
\begin{bsmallmatrix}
        \mathcal{E} & \mathcal{E} &\mathcal{E} &\mathcal{E} &\cdots\\
        B_0 & \mathcal{E} &\mathcal{E} &\mathcal{E} &\cdots\\
        \mathcal{E} & B_0 &\mathcal{E} &\mathcal{E} &\cdots\\
        \mathcal{E} & \mathcal{E} &B_0 &\mathcal{E} &\cdots\\
        \svdots & \svdots &\svdots &\svdots &\sddots
    \end{bsmallmatrix}\!,\
    R = 
\begin{bsmallmatrix}
        \mathcal{E} & \mathcal{E} &\mathcal{E} &\mathcal{E} &\cdots\\
        R_0 & \mathcal{E} &\mathcal{E} &\mathcal{E} &\cdots\\
        \mathcal{E} & R_0 &\mathcal{E} &\mathcal{E} &\cdots\\
        \mathcal{E} & \mathcal{E} &R_0 &\mathcal{E} &\cdots\\
        \svdots & \svdots &\svdots &\svdots &\sddots
    \end{bsmallmatrix},\
\]\[
    C = 
    \begin{bsmallmatrix}
        \mathcal{E} & C_0 &\mathcal{E} &\mathcal{E} &\cdots\\
        \mathcal{E} & \mathcal{E} &C_0 &\mathcal{E} &\cdots\\
        \mathcal{E} & \mathcal{E} &\mathcal{E} &C_0 &\cdots\\
        \mathcal{E} & \mathcal{E} &\mathcal{E} &\mathcal{E} &\cdots\\
        \svdots & \svdots &\svdots &\svdots &\sddots
    \end{bsmallmatrix},\ %\]\[
    U = 
    \begin{bsmallmatrix}
        \mathcal{E} & U_1 &\mathcal{E} &\mathcal{E} &\cdots\\
        E & U_0 & U_1 &\mathcal{E} &\cdots\\
        \mathcal{E} & E &U_0 & U_1 &\cdots\\
        \mathcal{E} & \mathcal{E} &E &U_0 &\cdots\\
        \svdots & \svdots &\svdots &\svdots &\sddots
    \end{bsmallmatrix},
\]
and $\epsilon_{n_x}$ indicates a vector of $n_x$ $\epsilon$'s.

%We can now rewrite \eqref{eq:U1}, \eqref{eq:centerbound}-\eqref{eq:x0_inequality} as a single inequality pertaining vectors and matrices with infinitely many entries:
%\begin{equation}\label{eq:Mbound}
%    \underbrace{
%    \begin{bmatrix}
%        x(0)\\x(1)\\x(2)\\x(3)\\x(4)\\\vdots
%    \end{bmatrix}
%}_{x}
%    \geq
%    \underbrace{
%    \begin{bmatrix}
%        \mathbb{E} & \mathcal{E} & \mathcal{E} & \mathcal{E} & \mathcal{E}&\cdots\\
%        P & Q & U_1 & \mathcal{E} &\mathcal{E}& \cdots\\
%        \mathcal{E} & P & Q & U_1 &\mathcal{E}& \cdots\\
%        \mathcal{E} & \mathcal{E} & P & Q &U_1& \cdots\\
%        \mathcal{E} & \mathcal{E} & \mathcal{E}& P & Q & \cdots\\
%        \vdots&\vdots&\vdots&\vdots&\vdots&\ddots
%    \end{bmatrix}
%}_M
%\underbrace{
%    \begin{bmatrix}
%        x(0)\\x(1)\\x(2)\\x(3)\\x(4)\\\vdots
%    \end{bmatrix}
%}_x
%.
%\end{equation}
%We call this technique \emph{unrolling}.
%
%
%Let us use the following notation to indicate some of the blocks of matrix $M^*$:
%\[
%M^* = 
%\begin{bmatrix}
%    \times & \times & \times & \times & \times & \cdots\\
%    M_1 & \times & \times & \times & \times & \cdots\\
%    \times & M_2 & \times & \times & \times & \cdots\\
%    \times & \times & M_3 & \times & \times & \cdots\\
%    \times & \times & \times & M_4 & \times & \cdots\\
%    \vdots & \vdots & \vdots & \vdots & \vdots & \ddots
%\end{bmatrix},
%\]
%where each "$\times$" denotes an $n_x\times n_x$ block of $M^*$ that is not useful for our discussion, and each $M_k$ is an $n_x\times n_x$ matrix.

\subsection{A larger state matrix}

%Note that \eqref{eq:EFR_infinite} gives a necessary condition for a consistent trajectory, which becomes also sufficient when combined with \eqref{eq:upper_infinite}.
%The observer that will be constructed in the following section is based on a new equation for $x$ that results in a stricter necessary condition for consistency compared to \eqref{eq:EFR_infinite}.
%Unlike \eqref{eq:EFR_infinite}, this new equation incorporates the information coming from the upper-bound constraints.
%This new equation incorporates the information coming from the upper-bound constraints.
%a stricter necessary condition for $x$ \eqref{eq:EFR_infinite}, as it incorporates the information coming from the upper-bound constraints.
%The following proposition gives a stricter dynamical equation for the evolution of the dater $x$ in P-TEGs.
The observer that will be constructed in the following section will look for the greatest sub-approximation of $x$.
To achieve this, it is useful to first obtain the largest matrix $M^*$ such that \eqref{eq:EFR_infinite} and \eqref{eq:upper_infinite} imply
\begin{equation}\label{eq:dynamics_M}
    x = x_0 \oplus M^*x \oplus Bu \oplus Rw.
\end{equation}

\begin{prop}\label{pr:dynamics_M}
Let $M = \tilde{\mathbb{E}}\oplus A \oplus U$, where $\tilde{\mathbb{E}} = \begin{bsmallmatrix}
    \mathbb{E} & \mathcal{E} & \mathcal{E} & \cdots\\
    \mathcal{E} & \mathcal{E} & \mathcal{E} & \cdots\\
    \mathcal{E} & \mathcal{E} & \mathcal{E} & \cdots\\
    \svdots & \svdots & \svdots & \sddots
\end{bsmallmatrix}$ and $\mathbb{E}$ is the $n_x\times n_x$ matrix in which all elements are $e$'s.
%Then, $M^*$ is the largest matrix such that \eqref{eq:EFR_infinite} and \eqref{eq:upper_infinite} imply
    Any vector $x$ satisfying \eqref{eq:EFR_infinite} and \eqref{eq:upper_infinite} satisfies also \eqref{eq:dynamics_M}.
    Moreover, for all $\tilde{M}$ such that $\tilde{M}_{ij}> (M^*)_{ij}$ for some $i,j$, there exist $u,w$ such that $x$ satisfies \eqref{eq:EFR_infinite} and \eqref{eq:upper_infinite} but $x\neq x_0\oplus \tilde{M}x\oplus Bu\oplus Rw$.
\end{prop}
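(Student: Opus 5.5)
We first prove the implication. Let $x$ satisfy \eqref{eq:EFR_infinite} and \eqref{eq:upper_infinite}. From \eqref{eq:EFR_infinite} we get $x\geq Ax$. From \eqref{eq:upper_infinite} we have $x\geq Ux$. Since $x(0)=e_{n_x}$ by \eqref{eq:EFR_infinite} (the first block row of $A$, $B$, $R$ is $\mathcal{E}$), we get $\mathbb{E}x(0)=e_{n_x}=x(0)$, hence $x\geq\tilde{\mathbb{E}}x$. Combining the three inequalities by the property of $\oplus$ gives $x\geq Mx$, and by induction $x\geq M^kx$ for all $k$, so $x\geq M^*x$. Moreover $M^*\geq E$ gives $M^*x\geq x$, and $M^*\geq A$ gives $M^*x\geq Ax$. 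Therefore
\[
x_0\oplus M^*x\oplus Bu\oplus Rw\;\geq\; x_0\oplus Ax\oplus Bu\oplus Rw \;=\; x ,
\]
while each term on the left is $\leq x$: $M^*x\leq x$, and $x_0,Bu,Rw\leq x$ by \eqref{eq:EFR_infinite}. This gives \eqref{eq:dynamics_M}.

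For the maximality part, fix $i,j$ with $\tilde M_{ij}>(M^*)_{ij}$. It then suffices to exhibit a consistent $x$ with $x_i<\tilde M_{ij}x_j$ and $x_j$ finite. Then $x_0\oplus\tilde Mx\oplus Bu\oplus Rw\geq \tilde M_{ij}x_j>x_i$, so the equation fails. Since $x_i\geq (M^*)_{ij}x_j$ always holds, we need a trajectory attaining $x_i-x_j=(M^*)_{ij}$ exactly, or arbitrarily close to it when that suffices for the strict inequality. If $(M^*)_{ij}=+\infty$, no $\tilde M_{ij}$ exceeds it, so that case is vacuous. If $(M^*)_{ij}=\varepsilon$, any consistent trajectory with all entries finite works.

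The natural candidate is the \emph{latest}-type trajectory anchored at $x_j$. Fix a large $T$ and consider the vector $z$ defined by $z_\ell = (M^*)_{\ell j}\otimes T$ for $\ell$ in the relevant (finite) horizon, extended suitably so that it is nondecreasing. By $M^*M^*=M^*$ (property \eqref{eq:a*a*}), we get $M^*z=z$ on the columns concerned, i.e.\ $z\geq Az$, $z\geq Uz$, $z\geq\tilde{\mathbb E}z$. To realize $z$ as a trajectory of \eqref{eq:EFR_infinite}, we use Assumption \ref{en:assumptionuncontrollable}: since $R_0=E$, choosing $w=z$ (restricted to $k\geq1$) and $u$ equal to anything below $z$ gives $x_0\oplus Az\oplus Bu\oplus Rz=z$, provided $z(0)=e_{n_x}$. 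This is why the anchor needs care. We take $z=x_0$-compatible by setting $z=M^*(x_0\oplus \delta_j T)$, where $\delta_j T$ is the unit vector with value $T$ at index $j$. This is the least solution of $z\geq Mz\oplus x_0\oplus\delta_jT$ by Theorem \ref{th:star}, and it satisfies $z_i\geq (M^*)_{ij}T$. We choose $T$ large so that the $x_0$ contribution does not dominate entries $i$ and $j$, giving $z_j=T$ and $z_i=(M^*)_{ij}T$. Then $z$ satisfies all constraints, and $w:=z$ realizes it as a solution of \eqref{eq:EFR_infinite}, \eqref{eq:upper_infinite}.

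The main obstacle is this last step. Two things must be checked: that $z$ is finite everywhere (infinite vectors; entries of $M^*$ may be $\varepsilon$, so we must add a background nondecreasing trajectory, e.g.\ $x_0$ propagated with large increments, and take the max), and that $z_j=T$ exactly, i.e.\ $(M^*)_{jj}=e$. The latter could fail if a circuit through $j$ has positive weight, but that would make consistency impossible and $(M^*)_{jj}=+\infty$, contradicting the finiteness assumed. We would handle the background by superposing, via $\oplus$, $z$ with a consistent trajectory chosen much smaller than $T$ near indices $i,j$. Some care is also needed for possible dependence of $(M^*)_{ij}$ on the initial block through $\tilde{\mathbb E}$.
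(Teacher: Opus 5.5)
\textbf{There is a genuine gap in the maximality part: the anchor value $T$ cannot be taken large.}

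Your first part is correct and is essentially the paper's argument. In the second part you use the same witness as the paper, $M^*x_0\oplus(M^*)_{\cdot,j}T$, realized through $w$ thanks to \ref{en:assumptionuncontrollable}. Your check that such a vector satisfies \eqref{eq:EFR_infinite} and \eqref{eq:upper_infinite} is fine (the paper only asserts it), but only \emph{provided} its block $0$ equals $e_{n_x}$. That proviso is exactly what fails.

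\begin{itemize}
\item Choosing $T$ large destroys block $0$. Block row $0$ of $M$ contains $U_1$. So whenever column $j$ of $M^*$ reaches block $0$ (through an upper bound on an initially marked place), some $\ell$ in block $0$ has $(M^*)_{\ell j}\neq\varepsilon$, and then $z_\ell\geq (M^*)_{\ell j}T>e$ for large $T$.
\item A concrete instance: in Fig.~\ref{fi:P-TEG_example}, the place with interval $[2,10]$ forces $x_1(1)\leq 10$, so no consistent trajectory has $x_1(1)=T>10$.
\item If $j$ lies in block $0$, $x_j=e$ cannot be set to $T$ at all.
\end{itemize}
So your claim that $M^*(x_0\oplus\delta_jT)$ is $x_0$-compatible is false in general, and superposing a background trajectory with $\oplus$ can never lower block $0$.

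Your treatment of the case $(M^*)_{ij}=\varepsilon$ is also wrong. An arbitrary consistent trajectory need not satisfy $\tilde M_{ij}x_j>x_i$; take $\tilde M_{ij}$ finite and very negative. In that case you do need $x_j$ large.

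The missing idea is to choose the anchor as the paper does: a $\lambda$ with $(M^*)_{ij}\lambda\leq(M^*x_0)_i<\tilde M_{ij}\lambda$. For instance, take $\lambda=(M^*x_0)_i-(M^*)_{ij}$ when $(M^*)_{ij}$ is finite, and take $\lambda$ large only when $(M^*)_{ij}=\varepsilon$. This bound is precisely what protects block $0$:
\begin{itemize}
\item Since $M^*\geq M^*\tilde{\mathbb{E}}$, every block-$0$ column of $M^*$ equals $M^*x_0$.
\item Hence $M^*M^*=M^*$ gives $(M^*)_{ij}\geq (M^*x_0)_i\,(M^*)_{\ell j}$ for every $\ell$ in block $0$.
\item Therefore $(M^*)_{\ell j}\lambda\leq e$, and $x(0)=e_{n_x}$ is preserved.
\item In the $\varepsilon$ case the same inequality forces $(M^*)_{\ell j}=\varepsilon$, so a large $\lambda$ is harmless there.
\end{itemize}
Then $x_i=(M^*x_0)_i$ and $x_j\geq(M^*)_{jj}\lambda\geq\lambda$. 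This uses only $M^*\geq E$, so you do not need $(M^*)_{jj}=e$. It follows that $\tilde M_{ij}x_j\geq\tilde M_{ij}\lambda>(M^*x_0)_i=x_i$, as required.

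Finally, you need no extra background trajectory for finiteness. The subdiagonal $E$ blocks of $U$ already give $M^*x_0\geq e$ everywhere.
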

\begin{pf}
    Note that \eqref{eq:x0} implies $x(0) \geq \mathbb{E}x(0)$; therefore $x\geq \tilde{\mathbb{E}}x$.
    %Similarly, the inequality in \eqref{eq:upper_infinite} is equivalent to $x = Ux \oplus x= Ux \oplus \begin{bsmallmatrix}
    %\mathbb{E} & \mathcal{E}\\\mathcal{E}&\mathcal{E}
%\end{bsmallmatrix}x\oplus x$.
    Moreover, $x\geq Ax$ is a consequence of \eqref{eq:EFR_infinite}.
    By incorporating all inequalities, including \eqref{eq:upper_infinite}, we get $x \geq \tilde{\mathbb{E}}x\oplus Ax\oplus Ux = Mx$; equivalently, we can write $x\geq M^*x$, or $x = M^*x \oplus x$.
    Finally, by substituting \eqref{eq:EFR_infinite} into this expression, we obtain $x = M^*x \oplus x_0 \oplus Ax\oplus Bu\oplus Rw$, which coincides with \eqref{eq:dynamics_M} since $M^*\geq A$.

    To prove the second part of the proposition, suppose that $\tilde{M}_{ij}>(M^*)_{ij}$ for some $i,j$.
    Because of \ref{en:assumptionuncontrollable}, we can always choose $u$ and $w$ so that $x = M^*x_0\oplus (M^*)_{\cdot,j} \lambda$, where $(M^*)_{\cdot,j}$ denotes the $j$-th column of $M^*$ and $\lambda\in\R$ is such that $\tilde{M}_{ij}\lambda>(M^*x_0)_i$ and $(M^*)_{ij}\lambda\leq (M^*x_0)_i$.
    It can be shown that such a choice always leads to a consistent trajectory.
    Moreover, with this choice, we have
    \begin{align*}
    x_i &\!=\! (M^*x_0)_i\oplus (M^*)_{ij}\lambda \!= \!(M^*x_0)_{i} < \tilde{M}_{ij} \lambda  \\
    &\!=\! \tilde{M}_{ij} (M^*)_{jj} \lambda \!\leq \! (\tilde{M}x)_i \!\leq \!(x_0 \oplus \tilde{M}x\oplus B u \oplus Rw)_i.\mathqedhere
        %(\tilde{M}x)_i \geq \tilde{M}_{ij}x_j \geq \tilde{M}_{ij} (M^*)_{jj} \lambda = \tilde{M}_{ij} \lambda > (M^*x_0)_i.
    \end{align*}
    %as we wanted.
    %On the one hand, we have 
    %\begin{align*}
    %    [\tilde{M}x]_i &=[\tilde{M}(M^*x_0)]_i = \bigoplus_h \tilde{M}_{ih} (M^*)_{h1} \geq \tilde{M}_{ij} (M^*)_{j1} \\
    %    &> (M^*)_{ij}(M^*)_{j1};
    %\end{align*}
    %on the other, 
    %\qedhere
    %Finally, we can transform the equation in explicit form using \ref{pr:star}, obtaining \eqref{eq:dynamics_M}.
    %As seen, $x$ must satisfy \eqref{eq:Mbound} or, equivalently, $x \geq M^* x$.
    %This immediately implies: for all $k\geq 1$,
    %\begin{equation*}
    %x(k) \geq M_k x(k-1) \ \Leftrightarrow \ x(k) = M_k x(k-1) \oplus x(k).
    %\end{equation*}
    %Incorporating the latter inequality into \eqref{eq:EFR_explicit}, we get: for all $k\geq 1$,
    %\begin{equation*}
    %    x(k) = (A\oplus M_k)x(k-1)\oplus Bu(k)\oplus Rw(k).
    %\end{equation*}
    %Thus, to conclude the proof, we only need to show that $M_k\geq A$ for all $k\geq 1$.
    %Indeed, since $M^*\geq M$, we have $M_k\geq P\geq A$.\qedhere
\end{pf}

%Proposition \ref{pr:dynamics_M} gives us a necessary but not sufficient condition for consistency.
%Indeed, while in any consistent trajectory $x$ must satisfy \eqref{eq:dynamics_M}, a vector $x$ following \eqref{eq:dynamics_M} may still violate the earliest firing rule, i.e., \eqref{eq:EFR_infinite}. %too large values of $u$ and $w$ may result in upper-bound constraint violations.
%Nevertheless, from the point of view of an external observer reconstructing a consistent trajectory $x$ of a P-TEG from $u$ and $y$, \eqref{eq:dynamics_M} represents a finer description of the dynamics compared to \eqref{eq:EFR_infinite} alone, as it incorporates the information of the upper-bound constraints.
%%Compared to \eqref{eq:EFR_explicit}, the usefulness of \eqref{eq:dynamics_M} is that it provides a better estimate of the dater $x$ because it incorporates information of the upper-bound constraints of the P-TEG.
%This fact will be exploited in the development of the observer in the next section.

\begin{rem}
Because of the earliest firing rule, $x$ must be the least solution of \eqref{eq:dynamics_M} and therefore, from Theorem~\ref{th:star},
\begin{equation}\label{eq:EFR_M}
    x = M^*x_0 \oplus M^*Bu \oplus M^*Rw.
\end{equation}
%\qedhere
\end{rem}

We recall from \cite{zorzenon2025modeling} that it is possible to compute any entry $(i,j)$ of $M^*$ in a number of operations that is polynomial in $n_x$, $i$, and $j$.

\section{Observer design}\label{se:observer}

\subsection{Problem description}\label{su:problem}

Consider a P-TEG evolving with a consistent trajectory $x$, described by \eqref{eq:EFR_M} and \eqref{eq:output_infinite} (thus, it is assumed that $u$ and $w$ are such that $x$ satisfies \eqref{eq:EFR_infinite} and \eqref{eq:upper_infinite}).
We want to construct an algorithm that, at any given time $t$, provides the best estimate $\hat{x}^t_i$ of the element $x_i$ of the dater function, given only the past behavior $u^t$ and $y^t$ of $u$ and $y$ and the structure of the P-TEG (i.e., matrices $M$, $B$, $R$, and $C$). 
The estimate $\hat{x}^t_i$ to be constructed must be as large as possible, based on the available information, but not larger than $x_i$.

With this aim, we proceed in two steps.
In step 1, we consider the fastest dynamics
\begin{align}\label{eq:EFR_observer}
    %\hat{x}(0) &= e_{n_x},\label{eq:x0_observer}\\ 
    %\hat{x}(k) &= M_k \hat{x}(k-1)\oplus Bu(k) \oplus L_k C(x(k) \oplus \hat{x}(k)),\label{eq:EFR_observer}
    %\hat{x} &= \hat{x}_0 \oplus M \hat{x} \oplus B u \oplus L(y\oplus \hat{y}),
    \hat{x} &= (M\oplus LC)^*(x_0 \oplus B u \oplus Ly)
\end{align}
of a Luenberger-type observer
\begin{align*}
    \hat{x} &= x_0 \oplus M \hat{x} \oplus B u \oplus L(y\oplus \hat{y}),\\
    \hat{y} &= C \hat{x},
\end{align*}
where $L\in\Rmax^{\N_0\times \N}$ is a matrix to be designed called the \emph{observer matrix}.
We will define matrix $L_{opt}$ to be the\footnote{We will show that there is exactly one matrix with this property.} (optimal) observer matrix, for which, for all $u$ and $w$, $\hat{x}$ is the greatest vector of the form \eqref{eq:EFR_observer} less than or equal to $x$.
%In other words, we search the latest possible time $\hat{x}_i$ (in accordance with the available information and the model \eqref{eq:EFR_observer}) of the $k$-th firing of the internal transition $x_j$, where $k$ and $j$ are computed from \eqref{eq:k_and_j}, not later than the actual firing time $x_i$. %, satisfying \eqref{eq:EFR_observer}. %; thus, $\hat{x}_i(k)$ represents an estimation of of $x_i(k)$.
Note that, because operations $\oplus$ and $\otimes$ are order preserving, this objective can be stated as the one of finding the greatest observer matrix $L_{opt}$ such that, for all $u$ and $w$, $\hat{x}\leq x$; formally,
\begin{equation}\label{eq:Lopt}
L_{opt} \coloneqq \bigoplus \left\{  L\in\Rmax^{\N_0\times\N}\left| \begin{array}{l}
(\forall u,w\in\R^{\N})\ \hat{x}\leq x %\ \wedge\ \eqref{eq:EFR_M} \\
     %\wedge\ \eqref{eq:EFR_observer}\ \wedge \ L=\Pr(L) 
\end{array}  \right. \right\}. 
\end{equation}

Step 1 of our procedure will result in a mathematical formula for $L_{opt}$, which we can then use for the actual online state estimation phase (step 2).
During this phase, at any given time, only partial elements of vectors $u$ and $y$ are known: those corresponding to events that have occurred in the past.
%Let $I^t_u,I^t_y\subseteq \N$ be respectively the set of all indices $i_u$ and $i_y$ such that $u_{i_u}$ and $y_{i_y}$ are known at a given time $t\in\R_{\geq 0}$.
For any $i\in\N$, we can then obtain the best estimate $\hat{x}^t_i$ of $x_i$ at time $t$ by substituting into \eqref{eq:EFR_observer} values $L=L_{opt}$, $u = u^t$, and $y = y^t$, where
\begin{equation}\label{eq:ut_yt}
    u^t_i = 
    \begin{dcases}
        u_i & \mbox{if $u_i\leq t$,}\\
        \epsilon & \mbox{else,}
    \end{dcases}\quad
    y^t_i = 
    \begin{dcases}
        y_i & \mbox{if $y_i\leq t$,}\\
        \epsilon & \mbox{else.}
    \end{dcases}
\end{equation}
If, at any finite time $t$, vectors $u^t$ and $y^t$ contain only finitely many elements different from $\epsilon$, the calculation of estimate $\hat{x}^t_i$ can be performed in finitely many operations, despite the matrices involved being infinite.
Note that the violation of this condition on $u^t$ and $y^t$ would imply the occurrence of infinitely many events in a finite time, which is not realistic in real systems modeled by  P-TEGs.

The rest of this section is organized as follows.
Sections \ref{su:splitting} and \ref{su:solving} concern step 1 of our procedure, and provide a formula for $L_{opt}$.
Section \ref{su:online} illustrates the proposed state estimation algorithm to the P-TEG in Figure~\ref{fi:P-TEG_example}.

\subsection{Splitting the inequality}\label{su:splitting}

In order to compute $L_{opt}$, we start by substituting \eqref{eq:output_infinite} and \eqref{eq:EFR_M} into \eqref{eq:EFR_observer}.
After performing simplifications analogous to those in \cite[Eq. (22)-(24)]{hardouin2010observer}, we obtain
\[
\hat{x} = M^*(LCM^*)^*x_0 \oplus M^*(LCM^*)^*Bu \oplus M^*(LCM^*)^+ R w.
\]

The following proposition allows us to split the expression "for all $u$ and $w$, $\hat{x}\leq x$" from \eqref{eq:Lopt} into a system of several inequalities independent of $u$ and $w$.

\begin{prop}
    Let $K\in\N$, for all $k\in\dint{1,K}$, $I,J_k\subseteq\Z$, $A_k,B_k\in\Rbar^{I\times J_k}$ and $a,b\in\Rmax^I$.
    %For all $k\in\dint{1,K}$, let $A_k,B_k\in\Rbar^{I\times J_k}$ and $a,b\in\Rmax^I$, where $K\in\N$.
    The following statements are equivalent:
    \begin{itemize}
        \item for all $v_k\in\Rbar^{J_k}$, 
            \begin{equation}\label{eq:split_inequality}
            a \oplus \bigoplus_{k=1}^K A_k v_k \leq b \oplus \bigoplus_{k=1}^K B_k v_k,
            \end{equation}
        \item for all $k\in\dint{1,K}$, $A_k\leq B_k$ and $a\leq b$.
    \end{itemize}
\end{prop}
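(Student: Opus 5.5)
The plan is to prove the two implications separately. The implication from the second statement to the first is a monotonicity argument. The converse uses suitably chosen test vectors $v_k$ that isolate one entry at a time.

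\emph{Sufficiency.} Assume $A_k\leq B_k$ for all $k$ and $a\leq b$. Since $\otimes$ is order preserving (Section~\ref{subsec: dioids}), $A_kv_k\leq B_kv_k$ for every $v_k\in\Rbar^{J_k}$, entrywise. I will check that this survives the conventions $-\infty+(+\infty)=-\infty$ and infinite sums: each product $(A_k)_{ij}(v_k)_j$ is monotone in $(A_k)_{ij}$, and the supremum over $j$ preserves order. Then $\oplus$ is order preserving, and $c\leq d,\ c'\leq d'\Rightarrow c\oplus c'\leq d\oplus d'$. Combining these gives \eqref{eq:split_inequality}.

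\emph{Necessity, $a\leq b$.} Assume \eqref{eq:split_inequality} holds for all $v_k$. Choose $v_k=\epsi$ (all entries $-\infty$) for every $k$. Because $\epsi$ is absorbing, even against entries $+\infty$ by the stated rule, all products vanish. Hence $a\leq b$.

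\emph{Necessity, $A_k\leq B_k$.} Fix $k$, $i\in I$, $j\in J_k$. Take $v_{k'}=\epsi$ for $k'\neq k$, and let $v_k$ have entry $j$ equal to a real number $\lambda$ and all other entries $\epsi$. Row $i$ of \eqref{eq:split_inequality} then reads
\[
a_i\oplus (A_k)_{ij}\lambda\leq b_i\oplus (B_k)_{ij}\lambda \qquad \text{for all } \lambda\in\R .
\]
Suppose for contradiction that $(A_k)_{ij}>(B_k)_{ij}$. Then $(A_k)_{ij}>-\infty$ and $(B_k)_{ij}<+\infty$. The key fact is that $b_i\in\Rmax$, so $b_i<+\infty$; this is where the hypothesis $a,b\in\Rmax^I$ rather than $\Rbar^I$ is used. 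I then choose $\lambda$ according to the value of $(A_k)_{ij}$.

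If $(A_k)_{ij}=+\infty$, any real $\lambda$ gives left side $+\infty$. The right side is then finite or $-\infty$, which is a contradiction.

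If $(A_k)_{ij}$ is finite, pick $\lambda>b_i-(A_k)_{ij}$ (any $\lambda$ if $b_i=-\infty$). Then $(A_k)_{ij}+\lambda>b_i$. Also $(A_k)_{ij}+\lambda>(B_k)_{ij}+\lambda$, since $\lambda$ is finite and $(B_k)_{ij}<(A_k)_{ij}$; this covers $(B_k)_{ij}=-\infty$ as well. So the left side strictly exceeds the right side, again a contradiction.

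Therefore $(A_k)_{ij}\leq(B_k)_{ij}$ for all $i,j,k$.

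The main obstacle is the bookkeeping of infinite values. Using $\lambda=+\infty$ as a test value would fail when $(B_k)_{ij}$ is finite, because both sides would become $+\infty$. This is why the plan restricts to real $\lambda$ and relies on $b$ having no $+\infty$ entries.
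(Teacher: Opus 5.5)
Your proposal is correct and takes essentially the same approach as the paper. Sufficiency follows from monotonicity, $v_k=\epsilon$ gives $a\leq b$, and a test vector with a single real entry at position $\bar j$ isolates $(A_{\bar k})_{\bar i\bar j}>(B_{\bar k})_{\bar i\bar j}$. The paper folds your case split into the single choice $\lambda=[(A_{\bar k})_{\bar i\bar j}\ldiv(1\otimes(a_{\bar i}\oplus b_{\bar i}))]\oplus e$, left to direct substitution, which your treatment of $+\infty$ entries and of $b\in\Rmax^I$ spells out.
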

\begin{pf}
Direction "$\Leftarrow$" of the proof is immediate by the fact that $\oplus$ and $\otimes$ are order preserving.
As for direction "$\Rightarrow$", it can be proven by contrapositive.
Suppose that $(A_{\bar{k}})_{\bar{i}\bar{j}} > (B_{\bar{k}})_{\bar{i}\bar{j}}$ for some $\bar{i},\bar{j},\bar{k}$.
Then, by direct substitution, it can be verified that, by choosing $(v_k)_i = \epsilon$ for all $(i,k)\neq (\bar{i},\bar{k})$ and $(v_{\bar{k}})_{\bar{i}} = [(A_{\bar{k}})_{\bar{i}\bar{j}}\ldiv(1\otimes (a_{\bar{i}}\oplus b_{\bar{i}}))]\oplus e$, \eqref{eq:split_inequality} is violated.
If instead $a_{\bar{i}}>b_{\bar{i}}$ for some $\bar{i}$ then, by taking $(v_k)_i=\epsilon$ for all $i,k$, \eqref{eq:split_inequality} does not hold.
\qedhere
\end{pf}

%Because of the latter proposition, we can split inequality $\hat{x}(k)\leq x(k)$ into a system of inequalities, each of which depends only on  
In order to study condition "for all $u$ and $w$, $\hat{x}(k)\leq x(k)$", the latter proposition allows to separate the terms in $\hat{x}$ and $x$ that depend only on $u$, those that depend only on $w$, and those constant with respect to $u$ and $w$.
Our problem can then be restated as the one aimed at finding the largest matrix $L$ such that
\begin{align}
    \label{eq:ineq_x0} M^*(LCM^*)^*x_0 &\leq M^*x_0,\\
    \label{eq:ineq_u} M^*(LCM^*)^*B &\leq M^*B,\\
    \label{eq:ineq_w} M^*(LCM^*)^+R & \leq M^*R.
\end{align}

%In the following, we will sketch the solution of the inequality in \eqref{eq:chosen_inequality}.
%Due to their similarities, the other inequalities can be solved analogously.

\subsection{Solving the inequalities}\label{su:solving}

For solving the system of inequalities \eqref{eq:ineq_x0}-\eqref{eq:ineq_w}, we recall a result from \cite[Lemmas 1, 2, and Proposition 2]{hardouin2010observer}.
Although this result was proven in the context of the $\gamma$-transform, the proof remains identical in the domain of infinite matrices because both settings are complete idempotent semirings.

\begin{prop}
    A matrix $L$ satisfies \eqref{eq:ineq_x0}-\eqref{eq:ineq_w} if and only if $L\leq L_{opt} = J_{opt}\otimes C^T$, where
    \begin{equation}\label{eq:Jopt}
        \begin{array}{rcl}
            J_{opt} &\coloneqq& (M^*x_0)^\times \wedge (M^*B)^\times \wedge (M^*R)^\times .
            %            L_{opt} &=& (M^*x_0)\rdiv(CM^*x_0)\wedge (M^*B)\rdiv(CM^*B)\\
%                    & \wedge& (M^*R)\rdiv(CM^*R)
        \end{array}
    \end{equation}
    %where $X^\times \coloneqq X\rdiv X$.\qedhere
\end{prop}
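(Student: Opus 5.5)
We will prove the claim in three stages: one residuation lemma for $X^\times$, then a reduction of each of \eqref{eq:ineq_x0}--\eqref{eq:ineq_w} to a single inequality in $LC$, and finally residuation with respect to $C$.

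\textbf{Step 1 (lemma on $X^\times$).} For any matrix $X$ (here $X$ ranges over $M^*x_0$, $M^*B$, $M^*R$) and any $D$ of compatible size, we show
\[
M^*D^*X\le X \iff D\le X^\times .
\]
The key properties of $X^\times=X\rdiv X$ are that it is a Kleene star, $(X^\times)^*=X^\times$, and that $X^\times X=X$. For $X=M^*Y$ with $Y\in\{x_0,B,R\}$, we also have $M^*X^\times\le X^\times$. This holds because $M^*X^\times X=M^*X\le M^*M^*Y=X$ by \eqref{eq:a*a*}, so $M^*X^\times\le X\rdiv X$ by the definition of right division.

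For "$\Leftarrow$": if $D\le X^\times$, then $M^*D^*X\le M^*(X^\times)^*X=M^*X^\times X=M^*X=X$.

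For "$\Rightarrow$": since $M^*\ge E$ and $D^*\ge D$, we get $DX\le M^*D^*X\le X$, hence $D\le X\rdiv X$.

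\textbf{Step 2 (reduction to an inequality in $LC$).} Here we take $D=LCM^*$, but the three inequalities do not have literally the form of Step~1, so each needs adjusting.

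In \eqref{eq:ineq_x0} the left side is $M^*(LCM^*)^*x_0$, whereas Step~1 needs $M^*(LCM^*)^*M^*x_0$. We use the identity $M^*(LCM^*)^* = (M^*LC)^*M^*$, obtained by expanding the stars term by term and using \eqref{eq:a*a*}. It gives
\[
M^*(LCM^*)^*M^*x_0=(M^*LC)^*M^*x_0 = M^*(LCM^*)^*x_0,
\]
so \eqref{eq:ineq_x0} is equivalent to $(M^*LC)^*X\le X$ with $X=M^*x_0$. The same identity handles \eqref{eq:ineq_u}. For \eqref{eq:ineq_w} the plus appears, and $M^*(LCM^*)^+R=(M^*LC)^+M^*R$.

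In every case the condition becomes $(M^*LC)^{*/+}X\le X$. This is equivalent to $M^*LC\,X\le X$, because $X^\times$ is a star and so contains every power of anything below it; in the star case the term $E\cdot X$ is trivially $\le X$. Next, $M^*LCX\le X$ is equivalent to $LCX\le X$. The forward direction uses $M^*\ge E$. The backward direction uses $M^*X=X$, which holds since $X=M^*Y$ and $M^*M^*=M^*$.

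Hence the three inequalities together are equivalent to $LC\le X^\times$ for each of the three choices of $X$. By the infimum property of $\rdiv$ (dually to \eqref{eq:x/a+b}), this is the single condition
\[
LC\le J_{opt}.
\]

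\textbf{Step 3 (residuation by $C$).} By the definition of $\rdiv$, $LC\le J_{opt}$ holds if and only if $L\le J_{opt}\rdiv C$. It remains to show $J_{opt}\rdiv C=J_{opt}\otimes C^T$. This uses the structure of $C$: every entry is $e$ or $\epsi$, and each column of $C$ contains at most one $e$, since each internal transition feeds at most one output by \ref{en:assumptioninputs}. We want to show $(J\rdiv C)_{ij}=\bigwedge_k J_{ik}\rdiv C_{jk}=J_{i\,c(j)}$ when row $j$ of $C$ has its unique $e$ at column $c(j)$. However, rows of $C$ may contain several $e$'s, because an output may observe several $x$'s. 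In that case the entry is $\bigwedge_k$ over those $k$, while $J_{opt}C^T$ gives $\bigoplus_k$ instead. So this is the step I expect to be the main obstacle.

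To handle it, I would argue as in \cite{hardouin2010observer}. By \ref{en:assumptioninputs}, each output transition has exactly one upstream place, so each row of $C_0$ contains exactly one $e$. Then $C$ is a "selection" matrix: $CC^T\le E$ and $C^TC\ge$ the diagonal projection onto the observed coordinates. From these one deduces $J\rdiv C=JC^T$. The inequality $JC^TC\le J$ makes $JC^T$ admissible, and maximality comes from $(J\rdiv C)\le (J\rdiv C)CC^T\le JC^T$.

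Once this identity is verified, the statement follows: $L$ satisfies \eqref{eq:ineq_x0}--\eqref{eq:ineq_w} if and only if $L\le J_{opt}C^T=L_{opt}$.
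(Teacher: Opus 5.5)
Your Steps 1--2 are correct. They reproduce the argument the paper imports from Lemmas 1, 2 and Proposition 2 of \cite{hardouin2010observer}: the sliding identity $M^*(LCM^*)^*=(M^*LC)^*M^*$, the fact that $M^*X=X$ for $X=M^*Y$, and residuation through $X^\times$. The cited result gives the bound in the form $\bigwedge_X X\rdiv(CX)=J_{opt}\rdiv C$. Your Step~3 usefully makes explicit the identity $J_{opt}\rdiv C=J_{opt}C^T$, which the paper uses without comment.

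The justification you give for that identity needs repair, however.

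\begin{itemize}
\item Assumption \ref{en:assumptioninputs} constrains the \emph{rows} of $C_0$, not its columns: each output has exactly one upstream internal transition. An internal transition may feed several outputs, so your remark that each column has at most one $e$ is false.
\item For the same reason $CC^T\le E$ is false in general: two outputs observing the same $x_k$ produce an off-diagonal $e$ in $CC^T$.
\item The inequalities you actually invoke need different properties from the ones you listed. $JC^TC\le J$ requires $C^TC\le E$, i.e., at most one $e$ per row of $C$. $J\rdiv C\le (J\rdiv C)CC^T$ requires $CC^T\ge E$, i.e., at least one $e$ per row, which is the opposite of what you stated.
\end{itemize}

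Both correct properties follow from \ref{en:assumptioninputs}, so your conclusion stands. More directly, the entrywise computation you began already finishes the job. If row $j$ of $C$ has its unique $e$ at column $c(j)$, then $(J\rdiv C)_{ij}=\bigwedge_{k:\,C_{jk}=e}J_{ik}=J_{i\,c(j)}=(JC^T)_{ij}$.
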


The following proposition simplifies the expression of $L_{opt}$ by leveraging \ref{en:assumptionuncontrollable}.

\begin{prop}
    $J_{opt} = M^*$.
\end{prop}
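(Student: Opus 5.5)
We will prove $J_{opt} = M^*$ by showing the two inequalities $J_{opt}\geq M^*$ and $J_{opt}\leq M^*$. The key observation is that, by \ref{en:assumptionuncontrollable}, $R_0=E$. Hence the infinite matrix $R$ is the identity shifted down by one block. Concretely, $M^*R$ consists of all columns of $M^*$ except the first block column (indices corresponding to $x(0)$), shifted one block to the left. The first block column of $M^*$ is instead captured by $M^*x_0$, since $x_0$ has $e$'s exactly in the first block. Thus the columns of $M^*x_0$ together with those of $M^*R$ span all the column information of $M^*$.

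\emph{Lower bound.} For any matrix $X$ of the form $M^*Y$, we have $M^*M^*Y=M^*Y$ by \eqref{eq:a*a*}. Hence $M^*X\leq X$, and by the definition of right division $M^*\leq X\rdiv X = X^\times$. Applying this with $X\in\{M^*x_0,\,M^*B,\,M^*R\}$ and taking the infimum in \eqref{eq:Jopt} gives $M^*\leq J_{opt}$.

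\emph{Upper bound.} It suffices to show $(M^*x_0)^\times\wedge(M^*R)^\times\leq M^*$, since $J_{opt}$ is even smaller. Take $Z$ with $Z M^*x_0\leq M^*x_0$ and $Z M^*R\leq M^*R$. Since $R$ selects columns $n_x+1,n_x+2,\dots$ of $M^*$ (block columns $1,2,\dots$), the second inequality says $Z(M^*)_{\cdot,j}\leq (M^*)_{\cdot,j}$ for every $j$ outside the first block.

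For $j$ in the first block, note that $M\geq\tilde{\mathbb{E}}$, so $M^*\geq \mathbb{E}^*=\mathbb{E}$ on the first diagonal block. Therefore the columns $(M^*)_{\cdot,j}$ for $j\in\dint{1,n_x}$ all coincide. Indeed, $(M^*)_{\cdot,j}\geq (M^*)_{\cdot,j'}\mathbb{E}_{j'j}$, so by symmetry they are equal, and each equals $M^*x_0=\bigoplus_{j\leq n_x}(M^*)_{\cdot,j}$. The first inequality then gives $Z(M^*)_{\cdot,j}\leq (M^*)_{\cdot,j}$ for these $j$ as well.

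Combining both cases yields $ZM^*\leq M^*$. Since $M^*\geq E$, we get $Z\leq ZM^*\leq M^*$. Taking $Z=(M^*x_0)^\times\wedge(M^*R)^\times$, which satisfies both hypotheses by \eqref{eq:(x/a)a} and the monotonicity of $\rdiv$ in its denominator argument via \eqref{eq:x/a+b}, we conclude $J_{opt}\leq M^*$.

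The step I expect to require the most care is this column-decomposition argument. It needs an exact verification that $M^*R$ is precisely the matrix of columns of $M^*$ beyond the first block; this uses $R_0=E$ and the index convention of Remark~\ref{re:unrolling}, where $w$ is indexed from $\N$ and $x$ from $\N_0$. It also needs a check that the first block of $M^*$ has identical columns, which relies on the all-$e$ block $\mathbb{E}$ in $\tilde{\mathbb{E}}$. Finally, we should verify the identity $X^\times\wedge Y^\times = [X\ Y]^\times$ for column-concatenated matrices. This follows entrywise from the formula $(A\rdiv B)_{ij}=\bigwedge_k A_{ik}\rdiv B_{jk}$, and it is what legitimizes treating the conditions column by column.
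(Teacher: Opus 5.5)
Your proof is correct. The lower bound is the same as the paper's, and you use the same split into the first block column (handled through $x_0$) and the remaining block columns (handled through $R$, with $R_0=E$). The route for the upper bound differs, mainly in the first block.

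\textbf{How the paper does it.} The paper works blockwise with explicit divisions:
\begin{itemize}
\item For block columns $j\ge 1$, it bounds $\bigwedge_k (M^*)^{ik}\rdiv (M^*)^{jk}$ by the term $k=j$ and uses $(M^*)^{jj}\ge E$.
\item For block column $0$, it computes $[(M^*x_0)^\times]^{i0}=(M^*)^{i0}$ exactly. This needs the identity $(M^*)^{00}=\mathbb{E}$, imported from Proposition~6.1 of Zorzenon's thesis. That identity does not follow from $M\ge\tilde{\mathbb{E}}$ alone. The half $(M^*)^{00}\le\mathbb{E}$ needs something like the existence of a consistent trajectory, via $e_{n_x}=x(0)\ge (M^*)^{00}x(0)$.
\end{itemize}

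\textbf{How you do it.} You use the residuation equivalence $Z\le X^\times \iff ZX\le X$ to reduce everything to $ZM^*\le M^*$ column by column. You then get $Z\le ZM^*$ from $M^*\ge E$; this is the paper's ``take $k=j$ and use $(M^*)^{jj}\ge E$'' phrased dually. For the first block you need only $(M^*)^{00}\ge\mathbb{E}$, which is automatic from $M\ge\tilde{\mathbb{E}}$, together with $M^*M^*=M^*$. These force the first $n_x$ columns of $M^*$ to coincide, so each equals $M^*x_0$.

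\textbf{Comparison.} Your argument is self-contained and does not rely on consistency or on the external result. The paper's computation gives the more explicit identity $[(M^*x_0)^\times]^{i0}=(M^*)^{i0}$, which the proof does not actually need.

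\textbf{Two minor points.}
\begin{itemize}
\item $Z=(M^*x_0)^\times\wedge(M^*R)^\times$ satisfies your two hypotheses simply because $Z\le X^\times$ gives $ZX\le X^\times X\le X$, by monotonicity of $\otimes$ and \eqref{eq:(x/a)a}, or directly by the residuation equivalence. Neither \eqref{eq:x/a+b} nor antitonicity of $\rdiv$ in the denominator is involved.
\item The identity $X^\times\wedge Y^\times=[X\ Y]^\times$ is true, but your argument never uses it.
\end{itemize}
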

\begin{pf}
Since, for all matrices $X$, 
\[
    (M^*X)^\times\!=\!(M^*X)\rdiv(M^*X) \!\!\eqtop{\eqref{eq:a*a*}} \!\!(M^*M^*X)\rdiv(M^*X)\!\!\geqtop{\eqref{eq:(x/a)a}} \!\!M^*,
\]
we have $J_{opt}\geq M^*$.
In the following, we show that $J_{opt}\leq M^*$, concluding the proof.
Given a matrix $X\in\Rbar^{\N_0\times \N_0}$, let us denote by $X^{ij}\in\Rbar^{n_x\times n_x}$ the $n_x\times n_x$ block of matrix $X$ in position $(i,j)$, i.e., $
%\[
    X = 
\begin{bsmallmatrix}
    X^{00} & X^{01} & X^{02} & \cdots\\
    X^{10} & X^{11} & X^{12} & \cdots\\
    X^{20} & X^{21} & X^{22} & \cdots\\
    \svdots & \svdots & \svdots & \sddots
\end{bsmallmatrix}$.
%\]
%Let $M^{ij},J_{opt}^{ij}\in\Rmax^{n_x\times n_x}$ indicate the $n_x\times n_x$ block of matrices $M^*$ and $J_{opt}$, respectively, in position $(i,j)$, i.e.,
%\[
%M^* = 
%\begin{bsmallmatrix}
%    M^{00} & M^{01} & M^{02} & \cdots\\
%    M^{10} & M^{11} & M^{12} & \cdots\\
%    M^{20} & M^{21} & M^{22} & \cdots\\
%    \svdots & \svdots & \svdots & \sddots
%\end{bsmallmatrix},
%J_{opt} = 
%\begin{bsmallmatrix}
%    J_{opt}^{00} & J_{opt}^{01} & J_{opt}^{02} & \cdots\\
%    J_{opt}^{10} & J_{opt}^{11} & J_{opt}^{12} & \cdots\\
%    J_{opt}^{20} & J_{opt}^{21} & J_{opt}^{22} & \cdots\\
%    \svdots & \svdots & \svdots & \sddots
%\end{bsmallmatrix}.
%\]
Through direct computations, it is immediate to obtain: for all $i\in\N_0$, $j\in\N$,
\begin{align*}
    [(M^*R)^\times]^{ij} &= \bigwedge_{k\in\N} ((M^*)^{ik}R_0)\rdiv((M^*)^{jk}R_0)\\
                             &\eqtop{\ref{en:assumptionuncontrollable}} \bigwedge_{k\in\N} (M^*)^{ik}\rdiv (M^*)^{jk} 
                             \leq\! (M^*)^{ij}\rdiv (M^*)^{jj} \\ &\!\!\eqtop{\eqref{eq:a**}}\!\! (M^*)^{ij}\rdiv ((M^*)^{jj})^*
                             = (M^*)^{ij}\rdiv ((M^*)^{jj}\oplus E)\\
                             &\!\!\eqtop{\eqref{eq:x/a+b}}\!\!(M^*)^{ij}\rdiv (M^*)^{jj} \wedge (M^*)^{ij} \leq (M^*)^{ij}. 
\end{align*}
Moreover, for all $i\in\N_0$,
\begin{align*}
    [(M^*x_0)^\times]^{i0} &= ((M^*)^{i0} e_{n_x})\rdiv ((M^*)^{00} e_{n_x})\\
                                  &= ((M^*)^{i0} e_{n_x})\rdiv (\mathbb{E} e_{n_x})
                                  =((M^*)^{i0} e_{n_x})\rdiv  e_{n_x} \\&=  (M^*)^{i0} e_{n_x} e_{n_x}^T 
                                  = (M^*)^{i0} (M^*)^{00} = (M^*)^{i0},
\end{align*}
where the latter equivalence is proven in \cite[Proposition 6.1]{zorzenon2025modeling}.
Since $J_{opt}\leq (M^*R)^\times$ and $J_{opt}\leq (M^*x_0)^\times$, we have proven that, for all $i,j\in\N_0$, $J_{opt}^{ij}\leq (M^*)^{ij}$; therefore, $J_{opt}\leq M^*$.\qedhere
\end{pf}

By substituting $L_{opt}$ into \eqref{eq:EFR_observer}, we obtain that the largest $\hat{x}$ satisfying $\hat{x}\leq x$ for all $u,w$ is given by
\begin{align}
    \hat{x} &= (M\oplus M^* \underbrace{C^TC}_{\leq E})^*(x_0\oplus Bu \oplus M^*C^Ty)\notag\\[-.2cm]
            &\eqtop{\eqref{eq:a*a*}} M^*x_0 \oplus M^*Bu \oplus M^*C^Ty.\label{eq:observer_solution}
\end{align}

\subsection{Online estimation}\label{su:online}

We now illustrate the online state estimation procedure explained in Section \ref{su:problem} for the P-TEG example of Figure~\ref{fi:P-TEG_example}. 
The results presented in the following can be replicated by running the MATLAB function \texttt{observerPTEGs.m} from the library TW\textsubscript{max}, which can be found at: \url{https://git.tu-berlin.de/davide_zorzenon/twmax}.

We will suppose that: the first four firing times of transitions $w_1,w_2,w_3$ are as in \eqref{eq:wexample}, the first four firings of $u_1$ all occur at time $0$, while all subsequent firings of $u_1$ occur after time $32$.
According to this, the first four firings of the internal transitions occur as indicated in \eqref{eq:xexample}.
Since $y_1(1) = 4$, at time $t=0$ only the firing times $u_1(1),\dots,u_1(4)$ are known.
Following \eqref{eq:ut_yt}, this implies that $u^0_i = 0$ for $i\in\dint{1,4}$, $u^0_i = \epsilon$ for all $i>4$, and $y^0_i = \epsilon$ for all $i\in\N$.
We can then compute any element $\hat{x}^0_i$ of estimate $\hat{x}^0$ by substituting values $u=u^0$ and $y=y^0$ in \eqref{eq:observer_solution}.
Because of the sparsity of $x_0$, $u^0$, and $y^0$, this only requires the knowledge of values $(M^*)_{ij}$ for $j\in\dint{1,n_x}=\dint{1,3}$ to compute $(M^*x_0)_i$ and for $j\in\dint{n_x+1,4\cdot n_x} = \dint{4,12}$ to compute $(M^*Bu)_i$ (the number $4$ appears in $4\cdot n_x$ because there are exactly $4$ real entries in $u^0$); the term $(M^*C^Ty)$ does not need to be computed because $y$ is $\epsilon$ everywhere.

We can generalize these facts to conclude that, since each element $(M^*)_{ij}$ can be obtained in time polynomial with respect to $i$, $j$, and $n_x$, for all times $t$, $\hat{x}_i^t$ can be computed in polynomial time with respect to $i$, the number of real entries in $u^t$ and $y^t$, and $n_x$.

Returning to our example, the resulting estimates of $x(1),\ldots,x(4)$ at time $t=0$ are (estimates matching actual firing times from \eqref{eq:xexample} are \underline{underlined})
\[
    \hat{x}^0(1) = 
    \begin{bmatrix}
        \underline{2}\\3\\3
    \end{bmatrix},\,
    \hat{x}^0(2) = 
    \begin{bmatrix}
        4\\5\\5
    \end{bmatrix},\,
    \hat{x}^0(3) = 
    \begin{bmatrix}
        6\\7\\7
    \end{bmatrix},\,
    \hat{x}^0(4) = 
    \begin{bmatrix}
        8\\9\\9
    \end{bmatrix}.
\]
%which correspond to the fastest consistent trajectory of the P-TEG.
At times $t=4$, $t=10$, $t=20$, and $t=31$, due to the new information coming from the firings of transition $y_1$, the estimates are updated to the following values (new values compared with previous estimates are in \textbf{bold}):
\[
    \hat{x}^4(1) = 
    \begin{bmatrix}
        \underline{2}\\\underline{\textbf{4}}\\\underline{\textbf{4}}
    \end{bmatrix},\,
    \hat{x}^4(2) = 
    \begin{bmatrix}
        4\\5\\5
    \end{bmatrix},\,
    \hat{x}^4(3) = 
    \begin{bmatrix}
        6\\7\\7
    \end{bmatrix},\,
    \hat{x}^4(4) = 
    \begin{bmatrix}
        8\\9\\9
    \end{bmatrix};
\]
%at time $t=10$ to
\[
    \hat{x}^{10}(1) \!= \!
    \begin{bmatrix}
        \underline2\\\underline4\\\underline4
    \end{bmatrix}\!,
    \hat{x}^{10}(2) \!= \!
    \begin{bmatrix}
        \textbf{5}\\\underline{\textbf{10}}\\\underline{\textbf{10}}
    \end{bmatrix}\!,
    \hat{x}^{10}(3) \!=\! 
    \begin{bmatrix}
        \textbf{7}\\\textbf{10}\\\textbf{10}
    \end{bmatrix}\!,
    \hat{x}^{10}(4) \!= \!
    \begin{bmatrix}
        \textbf{9}\\\textbf{10}\\\textbf{10}
    \end{bmatrix}\!;
\]
%at time $t=20$ to
\[
    \hat{x}^{20}(1) \!= \!
    \begin{bmatrix}
        \underline{2}\\\underline4\\\underline4
    \end{bmatrix}\!,
    \hat{x}^{20}(2) \!= \!
    \begin{bmatrix}
        5\\\underline{10}\\\underline{10}
    \end{bmatrix}\!,
    \hat{x}^{20}(3) \!=\! 
    \begin{bmatrix}
        \textbf{15}\\\underline{\textbf{20}}\\\underline{\textbf{20}}
    \end{bmatrix}\!,
    \hat{x}^{20}(4) \!= \!
    \begin{bmatrix}
        \textbf{17}\\\textbf{20}\\\textbf{20}
    \end{bmatrix}\!;
\]
%and, finally, at time $t=31$ to
\[
    \hat{x}^{31}(1) \!= \!
    \begin{bmatrix}
        \underline2\\\underline4\\\underline4
    \end{bmatrix}\!,
    \hat{x}^{31}(2) \!= \!
    \begin{bmatrix}
        \textbf{6}\\\underline{10}\\\underline{10}
    \end{bmatrix}\!,
    \hat{x}^{31}(3) \!=\! 
    \begin{bmatrix}
        \underline{\textbf{16}}\\\underline{20}\\\underline{20}
    \end{bmatrix}\!,
    \hat{x}^{31}(4) \!= \!
    \begin{bmatrix}
        \underline{\textbf{26}}\\\underline{\textbf{31}}\\\underline{\textbf{31}}
    \end{bmatrix}\!.
\]
Thus, eventually the observer converges to the actual firing times of internal transitions with the only exception of the second firing time of transition $x_1$, which is underestimated by $1$ time unit.
As an example, it is worth commenting on how the observer realizes, at time $31$, that the estimate $\hat{x}^{20}_1(2)=5$ needs to be adjusted: it is impossible that transition $x_1$ fires the second time at time $5$ and the third time at time $16$, as this would cause a token death in the place with time-window constraint $[2,10]$.
The earliest firing time not resulting in token deaths is thus $\hat{x}^{31}_1(2)=6$.

Let us compare our algorithm to an observer that does not exploit the information coming from upper-bound constraints (like the one from~\cite{hardouin2010observer}).
Such an observer can be constructed following the same procedure described in this paper, with the only difference that matrix $M$ is defined by $M = A$ instead of $M = \tilde{\mathbb{E}} \oplus A \oplus U$.
The estimates of $x(k)$ that this observer achieves at time $t = 31$, denoted by $\hat{x}^A(k)$, are shown below:
\[
    \hat{x}^A(1) \!= \!
    \begin{bmatrix}
        \underline{2}\\2\\\underline{4}
    \end{bmatrix}\!,
    \hat{x}^A(2) \!= \!
    \begin{bmatrix}
        4\\4\\\underline{10}
    \end{bmatrix}\!,
    \hat{x}^A(3) \!=\! 
    \begin{bmatrix}
        6\\6\\\underline{20}
    \end{bmatrix}\!,
    \hat{x}^A(4) \!= \!
    \begin{bmatrix}
        8\\8\\\underline{31}
    \end{bmatrix}\!.
\]
Except for $x_3(k)$ (which is measured), all estimations are worse than or equal to the ones obtained by our observer at time $t=0$.
%A visual comparison between the two methods is provided in Figure~\ref{fi:estimation_error}, where estimation error $x_1(k)-\hat{x}^t_1(k)$ for the $k$th firing of transition $x_1$, with $k\in\dint{2,4}$ and $t\in\{0,4,10,20,31,A\}$, is represented.
%
%\begin{figure}
%    \centering
%    \resizebox{.8\linewidth}{!}{
%    \input{figures/estimation_error}
%    }
%    \caption{Comparison of the estimation errors.}
%    \label{fi:estimation_error}
%\end{figure}

%\section{Conclusions}

%\appendix
%\input{appendix.tex}

\bibliography{references}

@book{baccelli1992synchronization,
  title={Synchronization and linearity: an algebra for discrete event systems},
  author={Baccelli, Fran{\c{c}}ois and Cohen, Guy and Olsder, Geert Jan and Quadrat, Jean-Pierre},
  year={1992},
  publisher={John Wiley \& Sons Ltd}
}

@inproceedings{khansa1996p,
  title={P-time {P}etri nets for manufacturing systems},
  author={Khansa, Wael and Denat, Jean-Paul and Collart-Dutilleul, Simon},
  booktitle={International Workshop on Discrete Event Systems, WODES},
  volume={96},
  pages={94--102},
  year={1996}
}

@article{hardouin2018control,
  title={{Control and state estimation for max-plus linear systems}},
  author={Hardouin, Laurent and Cottenceau, Bertrand and Shang, Ying and Raisch, J{\"o}rg},
  journal={Foundations and Trends{\textregistered} in Systems and Control},
  volume={6},
  number={1},
  pages={1--116},
  year={2018},
  publisher={Now Publishers, Inc.}
}

@ARTICLE{hardouin2010observer,
  author={Hardouin, Laurent and Maia, Carlos Andrey and Cottenceau, Bertrand and Lhommeau, Mehdi},
  journal={IEEE Transactions on Automatic Control}, 
  title={Observer Design for  $(\max, +)$ Linear Systems}, 
  year={2010},
  volume={55},
  number={2},
  pages={538-543},
  doi={10.1109/TAC.2009.2037477}
}

@mastersthesis{tirpak2024development,
  TITLE = {Development and implementation of an optimal state observer for {P}-time event graphs},
  AUTHOR = {Tirp\'{a}k, Dominik},
  SCHOOL = {Technische Universit\"{a}t Berlin},
  YEAR = {2024},
  MONTH = May,
  TYPE = {Master's thesis}
}

@phdthesis{zorzenon2025modeling,
    title={Modeling, Analysis, and Control of Discrete-Event Systems Described by Time-Window Constraints},
    author={Zorzenon, Davide},
    school = {Technische Universit\"{a}t Berlin},
    year = {2025},
    month = {June},
    doi = {10.14279/depositonce-24794}
}

@ARTICLE{bonhomme2015marking,
  author={Bonhomme, Patrice},
  journal={IEEE Transactions on Systems, Man, and Cybernetics: Systems}, 
  title={Marking Estimation of {P}-Time {P}etri Nets With Unobservable Transitions}, 
  year={2015},
  volume={45},
  number={3},
  pages={508-518},
  doi={10.1109/TSMC.2014.2353575}
}

@article{TRUNK2020501,
title = {Observer for Weighted Timed Event Graphs},
journal = {IFAC-PapersOnLine},
volume = {53},
number = {4},
pages = {501-507},
year = {2020},
note = {15th IFAC Workshop on Discrete Event Systems WODES 2020},
issn = {2405-8963},
doi = {10.1016/j.ifacol.2021.04.046},
author = {J. Trunk and G. Schafaschek and B. Cottenceau and L. Hardouin and J. Raisch}
}

@ARTICLE{Espindola2022stochastic,
  author={Espindola-Winck, Guilherme and Hardouin, Laurent and Lhommeau, Mehdi and Santos-Mendes, Rafael},
  journal={IEEE Transactions on Automatic Control}, 
  title={Stochastic Filtering Scheme of Implicit Forms of Uncertain Max-Plus Linear Systems}, 
  year={2022},
  volume={67},
  number={8},
  pages={4370-4376},
  doi={10.1109/TAC.2022.3176841}}

@article{di2010duality,
  title={Duality between invariant spaces for max-plus linear discrete event systems},
  author={Di Loreto, Michael and Gaubert, Stephane and Katz, Ricardo D and Loiseau, Jean-Jacques},
  journal={SIAM Journal on Control and Optimization},
  volume={48},
  number={8},
  pages={5606--5628},
  year={2010},
  publisher={SIAM}
}

\end{document}